\newcommand{\textredTech}[1]{{\color{black}{#1}}}
\newcommand{\textblue}[1]{{\color{black}{#1}}}
\newcommand{\sdollar}[1]{{#1}}
\newcommand{\IM}[1]{\cite{Kempe2003Maximizing,chen2010scalable,chen2009efficient,goyal2011data,jung2012irie,leskovec2007cost,nguyen2016cost,nguyen2016stop,ohsaka2014fast,song2015influence,tang2018online,tang2017influence,tang2015influence,tang2014influence,zhou2013ublf}}
\newcommand{\PM}[1]{\cite{tang2016profit,tang2017profit,tang2018towards,zhu2018host,zhang2016profit}}
\newtheorem{lemma}{Lemma}
\theoremstyle{definition}
\newtheorem{observation}{Observation}
\newtheorem{Theorem}{Theorem}
\newtheorem{example}{Example}
\newtheorem*{timeC}{Time complexity}
\newcommand{\problem}{S$^3$CRM}
\newcommand{\algo}{S$^3$CA}
\def\BibTeX{{\rm B\kern-.05em{\sc i\kern-.025em b}\kern-.08em
    T\kern-.1667em\lower.7ex\hbox{E}\kern-.125emX}}
\begin{document}
%
% paper title
% Titles are generally capitalized except for words such as a, an, and, as,
% at, but, by, for, in, nor, of, on, or, the, to and up, which are usually
% not capitalized unless they are the first or last word of the title.
% Linebreaks \\ can be used within to get better formatting as desired.
% Do not put math or special symbols in the title.
\title{Seed Selection and Social Coupon Allocation for Redemption Maximization in Online Social Networks}

% \author{} 

% author names and affiliations
% use a multiple column layout for up to three different
% affiliations
\author{\IEEEauthorblockN{Tung-Chun Chang\IEEEauthorrefmark{1}\IEEEauthorrefmark{3},
Yishuo Shi\IEEEauthorrefmark{2}, De-Nian Yang\IEEEauthorrefmark{2} and
Wen-Tsuen Chen\IEEEauthorrefmark{3}}
\IEEEauthorblockA{
\IEEEauthorrefmark{1}Research Center for Information Technology Innovation, Academia Sinica, Taiwan\\
\IEEEauthorrefmark{2}Institute of Information Science, Academia Sinica, Taiwan\\
\IEEEauthorrefmark{3}Department of Computer Science, National Tsing Hua University, Taiwan\\
Email: tcchang@citi.sinica.edu.tw, yishuoempty@outlook.com, dnyang@iis.sinica.edu.tw, wtchen@cs.nthu.edu.tw}}

% conference papers do not typically use \thanks and this command
% is locked out in conference mode. If really needed, such as for
% the acknowledgment of grants, issue a \IEEEoverridecommandlockouts
% after \documentclass

% for over three affiliations, or if they all won't fit within the width
% of the page, use this alternative format:
% 
%\author{\IEEEauthorblockN{Michael Shell\IEEEauthorrefmark{1},
%Homer Simpson\IEEEauthorrefmark{2},
%James Kirk\IEEEauthorrefmark{3}, 
%Montgomery Scott\IEEEauthorrefmark{3} and
%Eldon Tyrell\IEEEauthorrefmark{4}}
%\IEEEauthorblockA{\IEEEauthorrefmark{1}School of Electrical and Computer Engineering\\
%Georgia Institute of Technology,
%Atlanta, Georgia 30332--0250\\ Email: see http://www.michaelshell.org/contact.html}
%\IEEEauthorblockA{\IEEEauthorrefmark{2}Twentieth Century Fox, Springfield, USA\\
%Email: homer@thesimpsons.com}
%\IEEEauthorblockA{\IEEEauthorrefmark{3}Starfleet Academy, San Francisco, California 96678-2391\\
%Telephone: (800) 555--1212, Fax: (888) 555--1212}
%\IEEEauthorblockA{\IEEEauthorrefmark{4}Tyrell Inc., 123 Replicant Street, Los Angeles, California 90210--4321}}

% use for special paper notices
%\IEEEspecialpapernotice{(Invited Paper)}

% make the title area
\maketitle

% As a general rule, do not put math, special symbols or citations
% in the abstract
\begin{abstract}
Online social networks have become the medium for efficient viral marketing exploiting social influence in information diffusion. However, the emerging application {\em Social Coupon (SC)} incorporating social referral into coupons cannot be efficiently solved by previous researches which do not take into account the effect of SC allocation. The number of allocated SCs restricts the number of influenced friends for each user. In the paper, we investigate not only the seed selection problem but also the effect of SC allocation for optimizing the redemption rate which represents the efficiency of SC allocation. Accordingly, we formulate a problem named {\em Seed Selection and SC allocation for Redemption Maximization ({\problem})} and prove the hardness of {\problem}. We design an effective algorithm with a performance guarantee, called Seed Selection and Social Coupon allocation algorithm. For {\problem}, we introduce the notion of marginal redemption to evaluate the efficiency of investment in seeds and SCs. Moreover, for a balanced investment, we develop a new graph structure called guaranteed path, to explore the opportunity to optimize the redemption rate. Finally, we perform a comprehensive evaluation on our proposed algorithm with various baselines. The results validate our ideas and show the effectiveness of the proposed algorithm over baselines.
\end{abstract}

% no keywords

% For peer review papers, you can put extra information on the cover
% page as needed:
% \ifCLASSOPTIONpeerreview
% \begin{center} \bfseries EDICS Category: 3-BBND \end{center}
% \fi
%
% For peerreview papers, this IEEEtran command inserts a page break and
% creates the second title. It will be ignored for other modes.
\IEEEpeerreviewmaketitle

\section{Introduction}

Online social networks (OSNs), e.g., Facebook, Twitter, have fostered efficient diffusion of information and advertisement. Social influence \cite{social-influence} is the cornerstone of viral marketing and draws a wide spectrum of research, such as influence maximization (IM) \cite{Kempe2003Maximizing,chen2010scalable,chen2009efficient,goyal2011data,jung2012irie,leskovec2007cost,nguyen2016cost,nguyen2016stop,ohsaka2014fast,song2015influence,tang2018online,tang2017influence,tang2015influence,tang2014influence,zhou2013ublf} (identifying the $k$ seed users for maximizing the total influence spread) and profit maximization (PM) \cite{tang2017profit,tang2018towards,zhu2018host,zhang2016profit,khan2016revenue,lu2012profit,zhu2013influence} (maximizing the total benefit from influenced users subtracted by the total expense). Recently, {\em Social Coupon (SC)} \cite{businessinsider,hanson2017friends}, emerges to incorporate social referral into coupons \cite{businessinsider,hanson2017friends}. Real examples include Dropbox\footnote{\url{https://www.dropbox.com/referrals}}, which offers free space to a user that shares cost-effective deals to Facebook friends. The user will acquire 500 MB free space after any friend accepts the deal (up to 16 GB for each user). Airbnb\footnote{\url{https://www.airbnb.com/invite},~\url{https://www.airbnb.com/help/article/2269/airbnb-referral-program-terms-and-conditions}} allows users to invite their Facebook friends to sign up and completes the first trip, whereas \$29 and \$18 travel credits will be rewarded to the invitee and the inviter (up to \$5000 for each user), respectively. Similarly, Booking.com\footnote{\url{https://www.booking.com/content/referral-faq.en-gb.html}} encourages social referral (up to 10 friends for each user). To efficiently facilitate SC, new start-up companies (e.g., ReferralCandy and Extole\footnote{\url{https://www.referralcandy.com},~\url{https://www.extole.com}}) develop a customized SC system for e-commerce websites of clients. However, SC has drawn much less research attention, and currently the efficiency of SC is usually low (0.8\% and 15.9\% for offline and online SC), because resources are not properly allocated to valuable users \cite{jung2010online}.

The research of IM blossoms from Kempe et al. \cite{Kempe2003Maximizing} that introduced the basic formulation, propagation models and the corresponding approximation algorithm, and many follow-up studies improve the efficiency for massive social networks \cite{chen2010scalable,chen2009efficient,goyal2011data,jung2012irie,leskovec2007cost,nguyen2016cost,nguyen2016stop,ohsaka2014fast,song2015influence,tang2018online,tang2017influence,tang2015influence,tang2014influence,zhou2013ublf}. PM \cite{tang2017profit,tang2018towards,zhu2018host,zhang2016profit,khan2016revenue,lu2012profit,zhu2013influence} addresses the trade-off between the seed cost and the benefit of influenced users. However, SC is different to IM and PM due to the following reasons. 1) For SC, not only the seeds but also the internal nodes in an influence spread are required to be selected and allocated resources. In contrast, IM and PM target on only the seeds. 2) For SC, each internal node is associated with an {\em SC constraint}, which limits the maximum number of friends that can be referred and \textit{receive} SC (i.e., influenced by SC and then redeem it). For example, at most 32 (16 GB/500 MB) friends can receive SC from each Dropbox user. On the contrary, each internal node in IM and PM can influence an arbitrary number of friends since no resource is allocated. Therefore, the shape of a spread can be controlled more precisely by SC (e.g., broader or deeper) by carefully selecting internal nodes and allocating different resources to each of them. Moreover, each internal node is more inclined to be activated due to the reward from social referral. 
%Tang et al. \cite{tang2016profit,tang2017profit} consider different costs to activate more influential users. Hence, 

%Besides considering solely seed cost \cite{Kempe2003Maximizing,chen2010scalable,chen2009efficient,goyal2011data,jung2012irie,leskovec2007cost,nguyen2016cost,nguyen2016stop,ohsaka2014fast,song2015influence,tang2018online,tang2017influence,tang2015influence,tang2014influence,zhou2013ublf}, RM considers {\em SC cost}, i.e., the cost of SCs allocated to users, e.g., Dropbox users receive a maximum of 16 GB free space. Although PM \cite{tang2016profit,tang2017profit,zhu2018host,zhang2016profit,tang2018towards,lu2012profit,zhu2013influence} considers both fixed seed costs and user costs to maximize the profit, it doesn't consider the number of SCs allocated to each user and which users to allocate SCs, which are the decision variables of RM. However, the number of user's friends varies, e.g., famous figures have thousands of friends while loners may have few friends. Hence, different seed set, the number of each user's SCs, and which users have SCs, result in different redemption rate. Moreover, the {\em redemption cascade}, i.e., starting from a source (seed), a cascade of users who receive SCs, depends on the amount of user's SCs. For example, by allocating the famous figures more SCs while lowering the loner's SCs, the redemption rate increases since the loners may never distribute SCs. Hence, the number of user's SCs becomes crucial, which is not considered in IC and PM. 

In this paper, therefore, we make the first attempt to explore {\em seed selection with SC allocation}. Simultaneously selecting and allocating resources to both seeds and internal nodes raise the following new research challenges. The first one is a {\em benefit and total cost trade-off} between the benefit of influenced users and the total cost of seeds and SCs. Unlike previous works, SC needs to allocate difference resources to activate seeds directly and internal nodes through SCs. It is important to carefully examine the benefit of activating users by SCs or delegating them as seeds. The second one is {\em investment trade-off} between the investment in seeds and internal nodes due to a limited budget. Seeds start the influence spread while internal nodes sustain it, and the activation of seeds is usually more expensive. Thus, it is important to carefully allocate the resource to derive an efficient spread. The third one is {\em SC allocation trade-off} between allocating SCs to users near seeds and users with high benefit but far from seeds. Allocating SCs to users near seeds can strengthen the influence spread from its root, i.e., increase the number of expected redeemed SCs, while allocating SCs to high benefit users can improve the efficiency of the investment.

To address the above challenges, we formulate a new optimization problem, named Seed Selection and SC allocation for Redemption Maximization ({\problem}). In contrast to selecting only seeds to maximize the influence \cite{Kempe2003Maximizing,chen2010scalable,chen2009efficient,goyal2011data,jung2012irie,leskovec2007cost,nguyen2016cost,nguyen2016stop,ohsaka2014fast,song2015influence,tang2018online,tang2017influence,tang2015influence,tang2014influence,zhou2013ublf} or profit \cite{tang2017profit,tang2018towards,zhu2018host,zhang2016profit,khan2016revenue,lu2012profit,zhu2013influence}, {\problem} aims to select seeds and internal nodes and allocate SCs to users such that the redemption rate is maximized. The redemption rate is the ratio of the expected benefit of influenced users to the total cost (i.e., the sum of seed cost and SC cost). It is important to choose a proper seed set and internal nodes to start and shape the influence spread. Moreover, it is crucial to balance the investment in seeds and SCs, and meanwhile the total cost must be ensured within the investment budget.

We first prove the NP-hardness for {\problem} and then devise an approximation algorithm, named Seed Selection and SC allocation Algorithm ({\algo}). For the benefit and total cost trade-off, {\algo} introduces {\em marginal redemption} to measure the extra benefit obtained by activating a seed or allocating an SC. For investment trade-off, {\algo} {\em deploys} the resource by iteratively investing a new seed or an SC to an activated user with the highest marginal redemption to balance the investment in seeds and SCs. For SC allocation trade-off, {\algo} first constructs {\em guaranteed paths} to identify all possibly influenced inactive users for each seed under the budget constraint. Moreover, {\algo} introduces {\em amelioration index} and {\em deterioration index} to evaluate the improvement and deterioration of the {\em maneuver} of the previous deployed SCs to optimize the redemption rate. Finally, we evaluate {\algo} with real datasets, and the simulation results show that {\algo} can effectively improve the redemption rate up to 30 times comparing to the baselines.

The rest of this paper is organized as follows. Section \ref{sec:related} reviews the related works. The problem formulation and the hardness analysis are given in Section \ref{sec:problem}. We present the algorithm design in Section \ref{sec:algorithm}, and the performance analysis is described in Section \ref{sec:analysis}. The experimental results are shown in Section \ref{sec:sim}. Finally, Section \ref{sec:conclusion} concludes this paper.

\section{Related Work}
\label{sec:related}
Influence maximization focuses on the pinpoint of the $k$ most influential users in an OSN, and the identified users are delegated to start the influence propagation. The studies of influence maximization blossom from Kempe et al. \cite{Kempe2003Maximizing} introducing the basic formulation and propagation models. Furthermore, submodularity exhibits when the influence diffuses with the models, and thus they proposed a $(1-1/e)$--approximation algorithm. After the algorithm is proposed, many studies followed up to improve the efficiency \cite{chen2010scalable,chen2009efficient,goyal2011data,jung2012irie,leskovec2007cost,nguyen2016cost,nguyen2016stop,ohsaka2014fast,song2015influence,tang2018online,tang2017influence,tang2015influence,tang2014influence,zhou2013ublf} for massive OSNs. However, these works are not applicable to the SC scenario since they consider only seed selection without the SC allocation of internal nodes.

Tang et al. \cite{tang2017profit} state that the optimal profit cannot be derived from influence maximization since the influential users are probably expensive for activation, which results in a trade-off between the expense and the benefit of propagation. Thus profit maximization emerges, and profit is defined as the benefit from influenced users subtracted by the cost of activating seeds in \cite{tang2017profit}. Tang et al. \cite{tang2018towards} take the cost of activating non-seed users into consideration, which is a constant number for each user. However, it is not applicable to the SC scenario where the cost of non-seed users depends on the number of their allocated SCs. Zhu et al. \cite{zhu2018host} and Zhang et al. \cite{zhang2016profit} find the profits of activating users for the competitive environment and multiple products, respectively. Khan et al. \cite{khan2016revenue} formulated a revenue maximization problem from the perspective of the host of OSNs. Furthermore, Lu et al. \cite{lu2012profit} and Zhu et al. \cite{zhu2013influence} maximize the profit, where people are not influenced unless the price meets their expectation. These works \cite{zhu2018host,zhang2016profit,khan2016revenue,lu2012profit,zhu2013influence} are complementary to {\problem}.

\section{OSN Model and Problem Formulation}
\label{sec:problem}

For the SC scenario, independent cascade (IC) model\footnote{Since the SC is usually redeemed solely, the linear threshold is not suitable.} \cite{Kempe2003Maximizing} is extended by imposing an SC constraint to each user as follows. The influence propagation starts from the selected seeds, and then the seeds switch to the active state and try to activate their friends with the influence probability. Moreover, for each $v_i \in V$,  the maximum number of activated friends is restricted by the SC constraint (i.e., no more than $k_i$). The activation starts from the friend with the highest to the lowest influence probability, since in reality the friend with the highest influence probability is most likely to redeem the SC. Each user can be activated only once, and the activation process stops when no more user can be activated. 

\textblue{
For convenience, we have summarized all notations in Table \ref{notation}. The definitions of notations are explained as follows. Users $V$ of the OSN and their relationships $E$ form a weighted directed graph $G=\{V,E\}$, where each edge $e(i,j) \in E$ between $v_i$ and $v_j \in V$ has a weight $P(e(i,j))$, and $b(v_i)$, $c_{seed}(v_i)$, and $c_{sc}(v_i)$ are the benefit, seed cost, and SC cost of $v_i$, respectively. $k_i\in [0, |N(v_i)|] \in \mathbb{N}$ is the SC constraint for each $v_i$, where $|N(v_i)|$ is the number of $v_i$'s friends. Moreover, $S$ and $I$ are the seed set and the internal node set, respectively. $K(I)=\{k_i | \forall v_i \in I\}$ is the SC allocation. Let $C_{seed}(S)$ be the total cost of $S$. Let $C_{sc}(K(I)) = \sum_{v_i \in I, v_j \in N(v_i)} E[k_i, c_{sc}(v_j)]$, where $v_j$ is $v_i$'s friend with the $j$-th highest influence probability and $E[k_i, c_{sc}(v_j)]$ is the expected SC cost to distribute an SC to $v_j$. If $j \leq k_i$, $E[k_i, c_{sc}(v_j)] = c_{sc}(v_j) P(e(i,j))$; otherwise $j > k_i$, $E[k_i, c_{sc}(v_j)] = c_{sc}(v_j) P(e(i,j)) P(\bar{k_i})$ , where $P(\bar{k_i})$ is the probability that at most $k_i-1$ friends $v_{\hat{j}}\in N(v_i), \hat{j} < j$ success to redeem the SC.
Let $B(S,K(I)) = \sum_{v_i \in I} E[S,K(I),b(v_i)]$ be the expected benefit of the influence propagation, where $E[S,K(I),b(v_i)]$ is the expected benefit of $v_i$ with the seed set $S$ and SC allocation $K(I)$. Finally, the investment budget is $B_{inv}$.}

Consider Dropbox as an example. Users {\em distribute} SCs (post the invite link) to their Facebook friends. Once a friend $v_i$ {\em receives} the SC (installs Dropbox through the link), 500 MB of free space is rewarded (i.e., $c_{sc}(v_i)$). For each $v_i \in V$, the maximum number of friends receiving the reward is restricted by the number of {\em allocated} SCs (i.e., $k_i=32$).

\begin{table}[t]
\caption{Notations of {\problem} and {\algo}}%
% \captionsetup{font={footnotesize}}
% \captionof*{table}{Table IV: Notations of {\problem} and {\algo}}
    \label{notation}
\centering
\footnotesize
\begin{tabular}
[c]{|l|l|}
\hline
\textbf{Notation in Problem} & \textbf{Definition} \\
\hline
%$v_{i} \in V$, $e(i,j) \in E$ & set of users, set of relationships (edges) \\
%\hline
$|N(v_i)|$ & number of $v_i$'s friends (out-neighbors) \\
%\hline
%$G=(V,E)$ & online social network \\
%\hline
%$P(e(i,j))$ & weight (influence probability) of $e(i,j)$ \\
\hline
$c_{seed}(v_i)$, $c_{sc}(v_i)$, $b(v_i)$ & seed cost, SC cost, benefit of $v_i$ \\
\hline
$k_i\in [0, |N(v_i)|] \in \mathbb{N}$  &
SC constraint of $v_i$ \\
\hline
$S \subset V$, $I \subset V$ & seed set, internal node set \\
\hline
$K(I)=\{k_i | \forall v_i \in I\}$ &
 SC allocation  \\
\hline
$C_{seed}(S)$, $C_{sc}(K(I))$ & total seed cost, expected SC cost of $K(I)$ \\ 
\hline
$E[K_i, c_{sc}(v_j)]$  & expected SC cost ($v_i$ distributes an SC to $v_j$) \\
\hline
$B(S,K(I))$ & expected benefit of the influence propagation \\ 
\hline
$E[S,K(I),b(v_i)]$ & expected benefit of $v_i$ with $S$ and $K(I)$ \\
\hline
$B_{inv}$ & investment budget \\
\hline
\textbf{Notation in Algorithm} & \textbf{Definition} \\
%\hline
%$\hat{D}$ ($\hat{S}$,$\hat{I}$) & candidate deployment (seed, internal node set)\\
%\hline
%$\gamma_i$ & indicating whether user $v_i$ is selected as seed\\
\hline
$\tau_{\hat{S},\hat{I}}^{v_i,\gamma_i}$ & MR of selecting $v_i$ as seed (allocate coupon)\\
\hline
$g(s,v_i)$ & the GP from seed $s$ to user $v_i$\\
%\hline
%$\hat{c}$, $\hat{g}$   & total expected SC cost in candidate GP $\hat{g}$ \\
%\hline
%$v_i^l$ & user $v_i$ in the $l$-th level \\
\hline
$U_s^{{\hat{l}}}$ & ${\hat{l}}$-th level set of visited siblings from seed $s$ \\
\hline
$\hat{K}(I)$ ($c_{s,v_i^l}$) & SC allocation (cost) for each $g(s,v_i^l)$\\
\hline
%$b_{s,v_i}$ & total expected benefit of GP $g(s,v_i)$\\
%\hline
$I_a(g(s,v_i))$ & marginal ratio of $g(s,v_i)$\\
\hline
$m$ ($M$) & single (set) maneuver operation\\
\hline
 $\Delta_{v_j}(k)$ & retrieve $k$ SCs from $v_j$\\
\hline
$I_d(\Delta_{v_j}(k))$ & ratio of benefit loss to retrieved SC cost\\
\hline
$\beta_{g(s,v_i)}^{m, M^*}$ ($B$,$C$)& the maneuver (benefit,cost) gap\\
\hline
$\delta K$ & difference of total allocated SCs number\\
%\hline
%$K_i^j$ & maneuvering $K_i^j$ SCs from $v_j$ to $v_i$)\\
%\hline
%$\mathcal{K}=\{K_i^j | v_i, v_j \in V \}$ & the set of all the $K_i^j$\\
\hline
$b^M_j$ ($c^M_j$) & denote the benefit (cost) gain based on $M$\\
\hline
$B^{m,M}_{g(s,v_i)}$ ($C^{m,M}_{g(s,v_i)}$) & benefit (cost) gap after $m$ maneuver operation\\ 
\hline
\textbf{Notation in Proof} & \textbf{Definition} \\
\hline
$V_b^k$ & top $k$ highest influence users in $V_b$ \\
\hline
%$\hat{V}$ &  seed set selected for IM problem\\
%\hline
$c_{ref}$, $C_{opt}$ & reference cost, optimal cost set \\
\hline
$C_1$ ($C_2$) & the first (second) largest cost in $C_{opt}$\\
\hline
$C{sort}$ & sorted cost set from the smallest to the largest\\
\hline
$e_{ref}$ ($e_c$)  &  reference (corresponding candidate) edge set \\
\hline
$b_0$ ($c_0$) & ratio of benefit (cost) variation of users. \\
\hline
$e^*_t$ & maximum benefit edge of $OPT$\\
\hline
$O(M)$ & time of evaluating the expected benefit \\
\hline 
\end{tabular}
\label{Terminology}
\end{table}

\vskip 0.05in
\noindent {\bf {\problem} problem:}

Given an OSN, a propagation model, and the benefit, SC cost, and seed cost for each user, {\problem} is to identify of a seed set $S$, the internal node set $I$, and the SC allocation $K(I)$, to maximize the redemption rate, which is the ratio of the benefit from activated users to the sum of seed costs and SC costs, under an investment budget constraint $B_{inv}$. {\problem} is formulated as follows.

\begin{subequations}
    \begin{gather}
       \mathop{\arg\max}_{S \subset V, I \subset V, K(I)} \frac{B(S,K(I))}{C_{seed}(S)+C_{sc}(K(I))}\label{eq:objevtive},
    \end{gather}
    \text{subject to}
    \begin{gather}
        C_{seed}(S) + C_{sc}(K(I)) \leq B_{inv}.\label{eq:constraint}
    \end{gather}
\end{subequations}
\textblue{The redemption rate has been widely adopted as an effective measure for coupons in marketing research [25], [26], because the redemption rate considers not only the total benefit but also the marginal benefit of the budget (e.g., the additional benefit to spend the last 10\% budget).
Considering a simple example $G=(V,E)$ with two vertices, where $V=\{u,v\}$, $E=\emptyset$, and $B_{inv}=1$. For the two users, coupon costs are $c_{seed}(u)=\epsilon$ and $c_{seed}(v)=1-\epsilon$, and the benefit for them are $1-\epsilon$ and $\epsilon$, respectively, where $\epsilon$ is an arbitrarily small constant. The maximum benefit is $1$ by selecting both vertices, but the maximum redemption rate is $\frac{1-\epsilon}{\epsilon}$ by choosing only $u$ as the seed. User $v$ is not selected since the additional benefit $\epsilon$ generated by choosing $v$ is tiny, but a large budget (i.e., $1-\epsilon$) is required. Therefore, the redemption rate is widely adopted by marketing research to achieve the best performance/cost ratio.} 
% \textred{Due to the space constraint, we discuss special cases and give a comparison example in \cite{TechReport}.}

\vskip 0.05in
\noindent {\bf Special cases:}
We first give two commonly used coupon strategies, and then we show the strategies and IM are special cases of {\problem} as follows.
1) {\em Limited coupon strategy} is applied by Dropbox, Airbnb, Booking.com, etc., where $k_i$ of each user is an identical constant number, e.g., $k_i=32$ with Dropbox. Thus, it is a special case of {\problem} with a predetermined SC allocation $\hat{K}(I)$, which selects only seeds under the remaining budget (i.e., $B_{inv} - C_{sc}(\hat{K}(I))$). 2) {\em Unlimited coupon strategy} allocates unlimited SCs to users, which implies that the SC cost is marginal (i.e., $c_{sc}(v_i)=0,~\forall v_i \in V$), and is adopted by Uber\footnote{\url{https://help.uber.com/h/27ecd6af-4929-4c53-a81c-f9fbf2432fd4}}, Lyft\footnote{\url{http://get.lyft.com/invites}}, and Hotel.com\footnote{\url{https://refer.hotels.com/friends/us?traffic_source=mWebHP}}. Thus, it is a special case of {\problem}, which is $\mathop{\arg\max}_{S \subset V} \frac{B(S)}{C_{seed}(S)}$, such that $C_{seed}(S) \leq B_{inv}$. Since the SC constraint is removed, the propagation model reduces to the IC model. Similarly, IM is a special case of {\problem} with the SC cost equals to 0 and the SC constraint is unlimited for each user (detail later).

\vskip 0.05in
\noindent {\bf Comparison example:}
Fig. \ref{fig:example} compares IM, PM, and {\problem} with the default parameters (not presented in the figure) as follows. For each user $v_i$, both $c_{seed}(v_i)$ and $c_{sc}(v_i)$ are $1$, and $b(v_i)=3$. The investment budget $B_{inv}$ is $3.5$. Moreover, for the results, the users allocated with SCs and the users within the influence spread are marked in yellow and green, respectively. Since IM and PM do not consider SC allocation, unlimited coupon strategy is applied. For simplicity, we restrict the example of selecting only one seed and allocating two SCs.

Fig. \ref{fig:example}(a) shows the result of IM applying unlimited coupon strategy. For selecting $v_3$ as a seed (i.e., $k_3 = |N(v_3)| = 2$), the expected benefit is $3+3*0.7+3*0.5=6.6$ and the total cost is $C_{seed}(v_3)+C_{sc}(k_3) = 1.5 + (1*0.7+1*0.5) = 2.7$. For $v_1$ and $v_2$ as a seed, the expected benefits are $6.15$ and $4.68$, and the total costs are $2.05$ and $2.1$, respectively. Thus IM allocates two SCs to the selected seed $v_3$ with the maximum expected benefit (i.e., $6.6$) and the redemption rate is $6.6/2.7=2.45$. Fig. \ref{fig:example}(b) is the result of PM applying unlimited coupon strategy. For $v_1$ as a seed, the profit is derived by the expected benefit minus the seed cost (i.e., $(3+3*0.55+3*0.5)-1=5.15$). For $v_2$ and $v_3$ as a seed, the profit are $3.68$ and $5.1$, respectively, and the total costs are the same as IM. Thus, PM allocates two SCs to the selected seed $v_1$, which results in the maximum profit, and the redemption rate is $6.15/2.05=3$.

Fig. \ref{fig:example}(c) shows the result of {\problem}. For selecting $v_1$ as a seed, the remaining budget (i.e., $B_{inv}-c_{seed}(v_1)=2.5$) is invested with three cases as follows: 1) Allocating two SCs to $v_1$. The expected benefit and the total cost are $3+3*0.55+3*0.5=6.15$ and $1+(1*0.55+1*0.5)= 2.05$, respectively, and the redemption rate is $6.15/2.05=3$. 2) Allocating one SC to each of $v_1$ and $v_2$. The expected benefit and the total cost are $5.46$ and $1.975$, respectively. Note that, since $k_1=1$, the probability of activating $v_2$ becomes $(1-0.55)*0.5$ (i.e., $v_4$ fails to redeem the SC) and we call $e(1,2)$ a dependent edge (marked in red) while others independent edge. Thus, the redemption rate is $5.46/1.98=2.76$. 3) Allocating one SC to each of $v_1$ and $v_4$. The expected benefit is $8.295$ and the total cost is $2.675$, and thus the redemption rate is $8.295/2.675=3.1$. The allocation process iteratively calculates the redemption rate by assuming users as seeds. The final result is selecting seed $v_1$ and allocating SCs by $\{k_1=1, k_4=1\}$ with the maximum redemption rate $3.1$.

Fig. \ref{fig:example} shows that {\problem} derives the best redemption rate by carefully examining all possible combinations of seeds and SC allocation. Note that though $c_{seed}(v_4)=c_{seed}(v_5)>B_{inv}$ (i.e., $v_4$ and $v_5$ never become a seed), {\problem} can reap the highest benefit among users (i.e., $b(v_5)$) through allocating an SC to $v_4$ while IM and PM fail to activate $v_5$.

\begin{figure}[t]
    \centering
    \includegraphics[height=1.4in,width=3.5in]{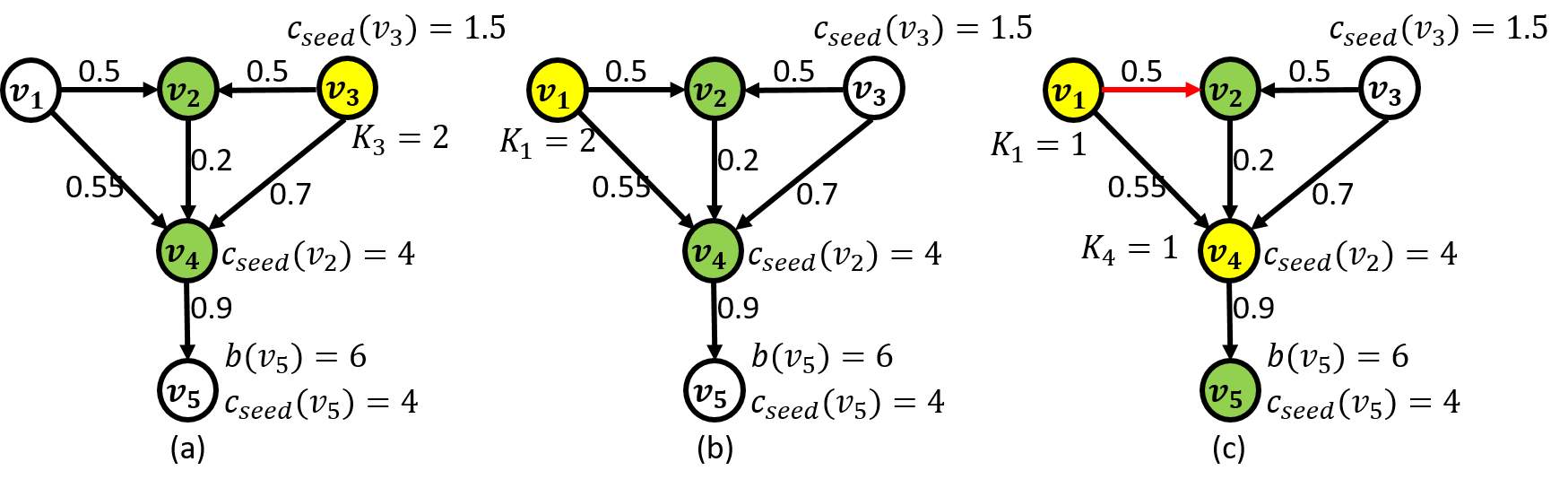}
    \vspace{-4mm} 
    \caption{A simple example comparing three different methods. (a) Result of IM with unlimited coupon strategy. (b) Result of PM with unlimited coupon strategy. (c) Result of {\problem}.}
    \label{fig:example}
\end{figure}

\begin{Theorem}
{\problem} is NP-hard and can not be approximated within $1-1/e+\varepsilon$, where $\varepsilon$ is an arbitrarily small constant.
\end{Theorem}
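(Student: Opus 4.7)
The plan is to prove both the NP-hardness and the $(1-1/e+\varepsilon)$ inapproximability via a polynomial-time gap-preserving reduction from Influence Maximization (IM) under the Independent Cascade model. Kempe, Kleinberg and Tardos established that IM is NP-hard through a reduction from Set Cover, and Feige's inapproximability of Max-$k$-Coverage propagates to IM to give a matching $(1-1/e+\varepsilon)$ hardness under standard complexity assumptions. Since the theorem has the same constant, this is the natural source to reduce from.

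The reduction exploits the special-case relationship already stated in Section~\ref{sec:problem}. Given any IM instance $(G,k)$, I would construct an {\problem} instance on the same graph by setting $c_{sc}(v_i)=0$ and $k_i=|N(v_i)|$ for every $v_i \in V$, which removes the SC constraint and SC cost so that the IC propagation is recovered; the seed costs together with $B_{inv}$ then encode the cardinality bound $k$. Under this setting the denominator of \eqref{eq:objevtive} collapses to $C_{seed}(S)$ and the objective becomes the budgeted ratio $B(S)/C_{seed}(S)$ subject to $C_{seed}(S)\leq B_{inv}$.

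The main obstacle is reconciling the ratio objective of {\problem} with the numerator-only objective of IM: because $B(S)$ is monotone submodular, a uniform-cost ratio $B(S)/|S|$ can be maximized at a singleton (since $B(S)/|S|\leq \max_{v}B(\{v\})$ by subadditivity), so the naive reduction is not gap-preserving. To close this gap, I would reduce from the gap version of Max-$k$-Coverage rather than IM in its cardinality form, and use non-uniform seed costs $c_{seed}(v_i)$ together with carefully chosen benefits $b(v_i)$ via the standard bipartite sets-to-elements encoding, augmented when necessary by a small number of auxiliary vertices that penalize small-$|S|$ configurations, so that in a YES instance the optimum ratio is attained exactly by the $k$ covering sets, while in a NO instance every feasible configuration --- including those with $|S|<k$ --- has ratio bounded away from the YES value by at least the desired $(1-1/e+\varepsilon)$ factor. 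The slack in Feige's gap construction, combined with the freedom to set $c_{seed}$, $b$, and $B_{inv}$ independently, should suffice to tune the parameters; verifying this slack is where the hard work lies.

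Once such a gap-preserving reduction is in place, a hypothetical $(1-1/e+\varepsilon)$-approximation for {\problem} would distinguish the two sides of the Max-$k$-Coverage gap in polynomial time, contradicting Feige's theorem; this simultaneously establishes NP-hardness and the claimed inapproximability bound for {\problem}.
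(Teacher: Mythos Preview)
You correctly identify the hardness source and, more importantly, you spot the real obstacle: once $c_{sc}=0$ the objective becomes $B(S)/C_{seed}(S)$, and with uniform costs subadditivity of $B$ forces the ratio maximizer to be a singleton, so the naive unlimited-coupon reduction is not gap-preserving. Where the proposal stalls is in the fix. You gesture at non-uniform seed costs and auxiliary penalty vertices but never specify them, and there is a concrete obstruction: in a YES instance of Max-$k$-Coverage some single set already covers at least $n/k$ elements by averaging, so its singleton ratio $(n/k)/1$ already matches the full-cover ratio $n/k$. Any gadget that suppresses small-$|S|$ solutions must either raise the cost of the first seeds or lower their marginal benefit, but doing this uniformly also depresses the YES ratio, while doing it selectively requires knowing which sets are good --- the very thing being decided. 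So the sentence ``verifying this slack is where the hard work lies'' understates the situation: as written this is not a proof but a hope, and it is not clear the route can be completed at all.

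The paper's reduction sidesteps the ratio issue by a different mechanism. Rather than letting seed selection carry the combinatorial choice, it introduces a \emph{single mandatory seed} $v_u$ of cost exactly $k$ (all other seed costs are made prohibitive) and encodes the choice of $k$ IM-seeds as the choice of which $k$ social coupons $v_u$ allocates, each with infinitesimal SC cost $\varepsilon$; the budget $B_{inv}=k+k\varepsilon$ enforces the cardinality bound. The denominator is then pinned to $k+O(k\varepsilon)$ for every feasible solution, so maximizing the redemption rate is equivalent, up to vanishing error, to maximizing the numerator --- precisely the IM objective on the embedded graph, which carries Feige's $1-1/e$ bound. In short, you tried to tame the denominator by cost engineering; the paper simply makes it a constant by shifting the combinatorial decision from seed selection to SC allocation.
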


% \textred{Due to the space constraint, the detailed proof is in \cite{TechReport}.}

\begin{proof}

We first give a special case of {\problem}, and then prove that it can be reduced to IM, which derives the hardness of {\problem}. The special case is given as follows. For the OSN $G=\{V,E\}$, the set of users $V=\{v_u \cup V_a \cup V_b\}$ includes an unique user $v_u$ and two identical set of users $V_a$ and $V_b$. For each user $v_i^b \in V_b$, $v_i^b$ connects to only the counterpart $v_i^a \in V_a$ with the weight of $1$. Let $V_b^k$ denote the set of users in $V_b$, who have the top $k$ highest influence. The unique user $v_u$ connects to each user in $V_b^k$ with a weight of $1$. The seed cost of $v_u$ is $k$ while the seed costs of users in $V_a$ and $V_b$ are arbitrarily large, and thus $v_u$ is the only candidate of seed. For each user $v_i^b \in V_b$, let $c_{sc}(v_i^b)=\varepsilon$, where $\varepsilon$ is an arbitrarily small constant. For each user $v_i^a \in V_a$, let $c_{sc}(v_i^a)=0$ which implies that if the counterpart $v_i^b$ is activated by $v_u$, it is activated simultaneously, and then the influence spread of users in $V_a$ is unlimited by the SC constraint (reduces to the IC model). Let $b(v_u) = \varepsilon$, and for each user $v_i^b \in V_b$ and $v_b^a \in V_a$, let $b(v_i^b)=0$ and $b(v_i^a)=1$, respectively. Moreover, let $B_{inv} = k+k\varepsilon$, which implies that the seed $v_u$ can be allocated with at most $k$ SCs.

For the solution of the special case, the seed set $\hat{S}=\{v_u\}$ and the internal node set $\hat{I}=\{v_u \cup V_a \cup V_b^k\}$, where $v_u$ is allocated with $k$ SCs (i.e., $k_u=k$) while the users in $V_a$ have no SC constraints. Thus, the expected benefit is not less than $\varepsilon$ (i.e., benefit of the seed $b(v_u)$) and the total cost ranges from $k$ to $k+k\varepsilon$ (i.e., from seed cost of $v_u$ to $B_{inv}$). Furthermore, the special case can be reduced to IM as follows. Given a IM with the OSN $G'=\{V',E'\}$, where $V'=\{ V_a \cup V_b^k \}$ and $E'$ is the set of edges between users in $V'$, and the seed size of $k$. For any feasible solution of the special case (i.e., $\hat{S}$ and $\hat{I}$), $V_b^k$ is a feasible solution of the IM problem. On the other hand, let $\hat{V}$ denote the seed set selected by solving the IM problem. For any feasible solution of the IM problem, $\hat{V} = V_b^k$ since $v_u$ connects the top $k$ influential users, and thus we can derive a feasible solution of the special case.

For the objective function, the expected benefit derived from the IM problem is slightly less than the expected benefit from the special case by $\varepsilon$, which is negligible. Likewise, the total cost of the special case is approximately equal to $k$ (i.e., $k+k\varepsilon \approx k$) since the SC cost of users in $V_b$ is negligible. Therefore, the special case can be reduced to IM. Furthermore, IM is as hard as maximum $k$--cover problem by the reduction that setting all influence probability to $1$ of IM. Hence, {\problem} is as hard as the maximum $k$--cover problem and the hardness is $1-1/e+\varepsilon$ \cite{Feige}.

\begin{figure}[t]
    \centering
    \includegraphics[height=1.8in,width=3in]{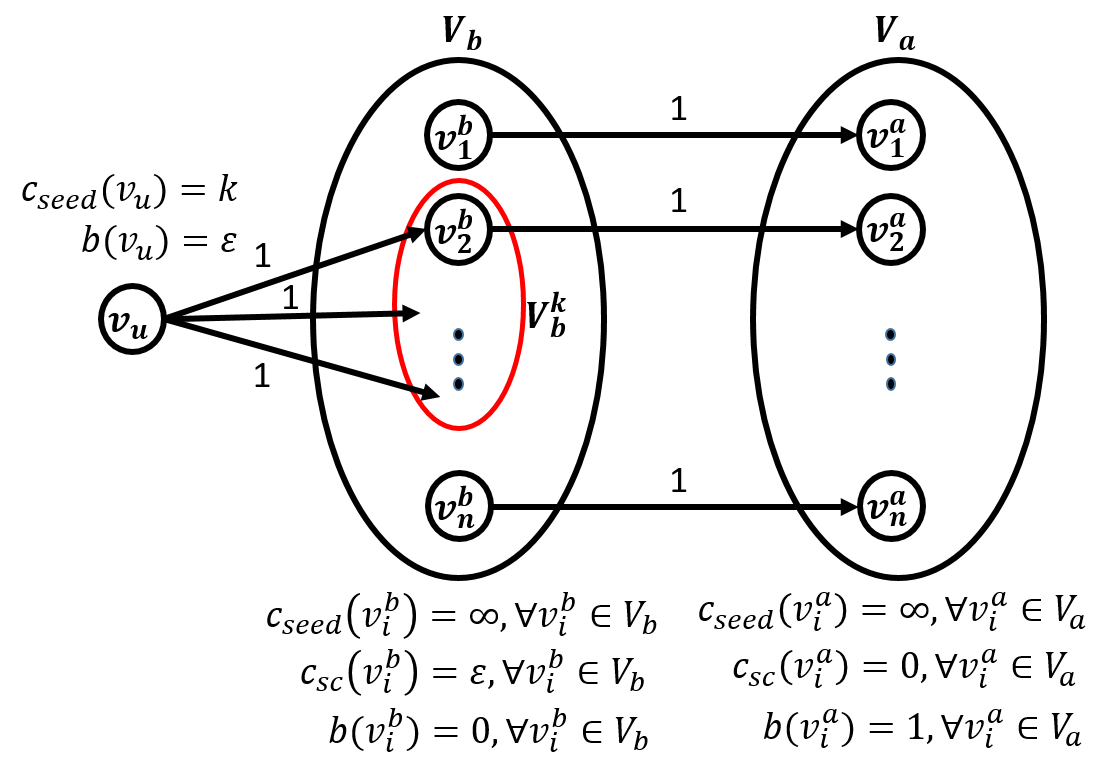}
    % \vspace{-8mm} 
    \caption{Reduction of {\problem} to IM}
    \label{fig:nphard}
\end{figure}

\end{proof}

% \input{hardness}
% \textred{
\section{Algorithm Design}
\label{sec:algorithm}
To effectively solve {\problem}, we design an approximation algorithm, namely Seed Selection and SC allocation Algorithm ({\algo}). In contrast to IM \cite{Kempe2003Maximizing,chen2010scalable,chen2009efficient,goyal2011data,jung2012irie,leskovec2007cost,nguyen2016cost,nguyen2016stop,ohsaka2014fast,song2015influence,tang2018online,tang2017influence,tang2015influence,tang2014influence,zhou2013ublf} and PM \cite{tang2017profit,tang2018towards,zhu2018host,zhang2016profit,khan2016revenue,lu2012profit,zhu2013influence} focusing on solely seed selection and ignoring the SC allocation, {\algo} introduces the notion of {\em Marginal Redemption (MR)} to prioritize the investment in seeds and allocating SCs to users. For each user, MR represents the ratio of the expected benefit gain to the expected cost gain after activated as a seed or allocated with an extra SC, and therefore a user with a larger MR receives the investment (seed or SC) first. To address the investment trade-off, {\algo} {\em deploys} the investment by carefully examining the MR for three strategies as follows: 1) broadening the current influence spread, 2) deepening the current influence spread, and 3) initiating a new source (seed) of the influence spread.

To address the SC allocation trade-off, {\algo} introduces the notion of {\em Guaranteed Path (GP)} to identify the users with high benefit but receiving no resources (not activated) from the previous deployment. For each user in a GP, the SCs are always sufficient for distribution to ensure the highest activation probability (i.e., no dependent edge) and improve the redemption rate. To evaluate the possible improvement for each GP, {\algo} introduces {\em Amelioration Index (AI)} and {\em Deterioration Index (DI)}, which are the ratio of improved and decreased expected benefit to the allocated and retrieved SC cost. Then, {\algo} can {\em maneuver} the investment in SCs to optimize the redemption rate by examining AI and DI.

{\algo} includes three phases: 1) Investment Deployment (ID), 2) Guaranteed Paths Identification (GPI), and 3) SC Maneuver (SCM). For the investment trade-off, ID exploits MR to iteratively deploy the investment by adopting the strategy with the highest MR. GPI continues to identify the GPs of inactive users with high benefit. Based on the derived GPs, SCM carefully examines the opportunity to maneuver the SCs allocated to users in the ID phase to create a new spread, such that the redemption rate is optimized.

\subsection{Algorithm Description}
\subsubsection{Investment Deployment (ID)}
ID first deploys the investment in seeds and internal nodes under the investment budget by three strategies as follows: 1) activating the source (seed) of a new influence spread, 2) allocating an SC to a user to broaden the current influence spread; meanwhile, enhancing the influence probability by changing an edge from dependent to independent (e.g., by allocating an SC to $v_1$ in Fig. \ref{fig:example}(c), the influence probability improves 27.5\%), and 3) allocating an SC to the user in the end of the current influence spread to deepen the spread. Nonetheless, previous works (IM \cite{Kempe2003Maximizing,chen2010scalable,chen2009efficient,goyal2011data,jung2012irie,leskovec2007cost,nguyen2016cost,nguyen2016stop,ohsaka2014fast,song2015influence,tang2018online,tang2017influence,tang2015influence,tang2014influence,zhou2013ublf} and PM \cite{tang2017profit,tang2018towards,zhu2018host,zhang2016profit,khan2016revenue,lu2012profit,zhu2013influence}) includes no such subtle manipulations of influence spreads.

{\algo} introduces {\em pivot source} for the evaluation of investment in seeds and SCs, which is derived as follows. For each user, ID calculates the MR (detailed later) of the nodes selected as a seed or allocated with an SC after becoming a seed (i.e., updating $K_i$ to $1$), and it then iteratively applies the one with the maximum positive MR. Once a user is selected, it is included in a priority queue $Q$ (sorted by redemption rate). Moreover, when an SC is allocated to a seed in $Q$, the redemption rate is updated. The process stops when all MR becomes negative or all users are included in $Q$. Afterward, $v_i$ is popped from $Q$ as the initial of the influence spread, and it is then included to the candidate seed set $\hat{S}$ and the candidate internal node set $\hat{I}$ (if $K_i=1$). Moreover, \{$\hat{S}$, $\hat{I}$, $K(\hat{I})$\} is included in the candidate deployment $D$. ID further pops $v_p$ from $Q$ as the {\em pivot source} for the comparison of the MR of allocating an SC to users.

For the investment trade-off, ID iteratively examines the MR of allocating an SC to users (strategy 2 and 3) and the redemption rate of the pivot source $v_p$ (strategy 1). If the user $v_i$ has the maximum MR larger than the redemption rate of $v_p$, ID invests an SC in $v_i$ (i.e., increases $K_i$ by 1) and includes $v_i$ in $\hat{I}$ if $v_i \notin \hat{I}$; otherwise, ID initiates a new source $v_p$ and includes $v_p$ in $\hat{S}$ and $\hat{I}$, and it pops the next pivot source from $Q$. In each iteration, $D$ includes the current deployment \{$\hat{S}$, $\hat{I}$, $K(\hat{I})$\}. The process stops when the total cost exceeds the investment budget $B_{inv}$, and the final deployment of ID is $D^* \in D$ with the highest redemption rate. Specifically, let $\gamma_i$ denote the binary variable (initialized as 1) indicating that if a user $v_i$ is selected as seed for the first time, $\gamma_i=1$; otherwise, $\gamma_i=0$. If $\gamma_i=1$, MR of selecting $v_i$ as a seed is defined as $\tau_{\hat{S},\hat{I}}^{v_i,\gamma_i} = \frac{B(\hat{S} \cup v_i, K(\hat{I}))-B(\hat{S},K(\hat{I}))}{C_{seed}(\hat{S} \cup v_i)-C_{seed}(\hat{S})}$; otherwise, MR of allocating an SC to $v_i$ is defined as $\tau_{\hat{S},\hat{I}}^{v_i,\gamma_i} = \frac{B(\hat{S}, K(\hat{I} \cup v_i))-B(\hat{S},K(\hat{I}))}{C_{sc}(K(\hat{I} \cup v_i))-C_{sc}(K(\hat{I}))}$. Note that when $v_i \in \hat{I}$, $K(\hat{I} \cup v_i)$ is different from $K(\hat{I})$ since $K_i$ is increased by 1.

\begin{figure}[t]
    \centering
    \includegraphics[height=1.2in,width=3.5in]{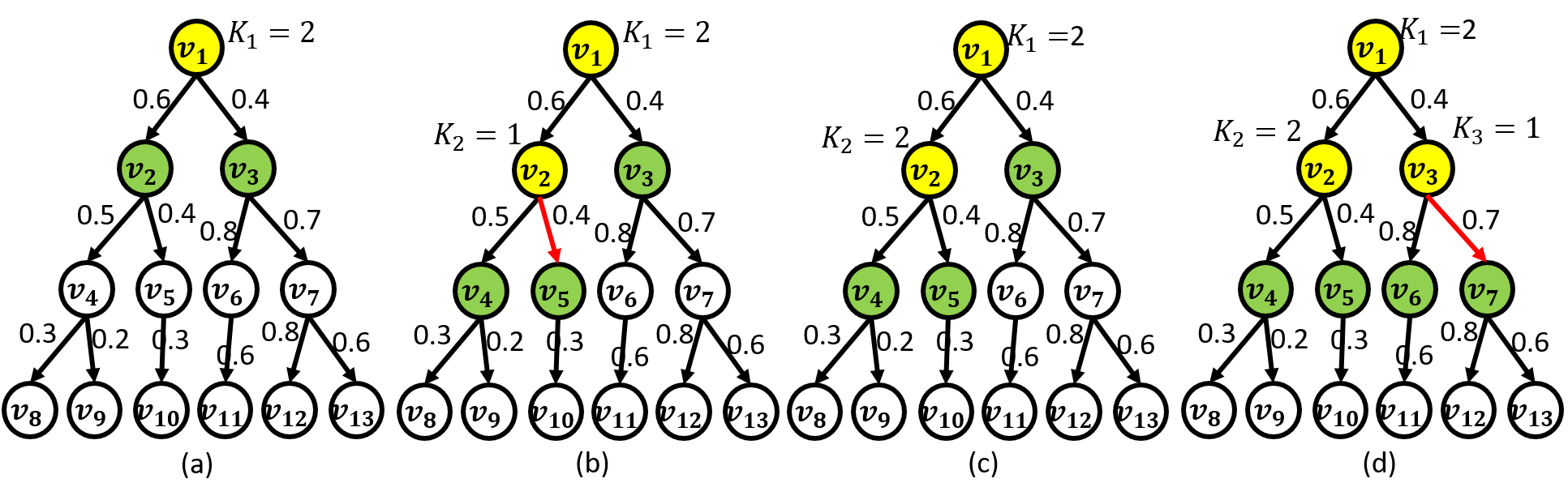}
    \vspace{-8mm} 
    \caption{Results of the example of ID. (a) Iteration 1. (b) Iteration 2. (c) Iteration 3. (d) Iteration 4.}
    \label{fig:RD}
\end{figure}

\begin{example}
Fig. \ref{fig:RD} presents an example of ID with the input setting as follows. For each user $v_i$, $b(v_i)=c_{sc}(v_i)=1$, and $c_{seed}(v_i) \approx \infty$ except that $c_{seed}(v_1) \approx 0$. The dependent edges are marked in red. Since the only possible seed is $v_1$, the queue $Q$ contains only $v_1$, and the influence starts from $v_1$ allocated with an SC. The expected benefit and cost are approximately $1+(1*0.6+1*(1-0.6)*0.4)=1.76$ and $0+1*0.6+1*(1-0.6)*0.4=0.76$, respectively. In the first iteration, the expected benefit gains of allocating an SC to $v_1$ ($K_1=2$), $v_2$ ($K_1=1$, $K_2=1$), and $v_3$ ($K_1=1$, $K_3=1$) are $(1+1*0.6+1*0.4)-1.76=0.24$, $(1+1*0.6+1*(1-0.6)*0.4+1*0.6*0.5+1*0.6*(1-0.5)*0.4)-1.76)=0.42$, $1+1*0.6+1*(1-0.6)*0.4+1*(1-0.6)*0.4*0.8+1*(1-0.6)*0.4*(1-0.8)*0.7)-1.76=0.15$, respectively. Moreover, the expected cost gains of allocating an SC to $v_1$ ($K_1=2$), $v_2$ ($K_1=1$, $K_2=1$), and $v_3$ ($K_1=1$, $K_3=1$) are $(1*0.6+1*0.4)-0.76=0.24$, $(1*0.6+1*(1-0.6)*0.4+1*0.5+1*(1-0.5)*0.4)-0.76=0.7$, and $(1*0.6+1*(1-0.6)*0.4+1*0.8+1*(1-0.8)*0.7)-0.76=0.94$, respectively. Thus, MR of allocating an SC to $v_1$ ($K_1=2$), $v_2$ ($K_1=1$, $K_2=1$), and $v_3$ ($K_1=1$, $K_3=1$) are $0.24/0.24=1$, $0.42/0.7=0.6$, and $0.15/0.94=0.16$, respectively, and the SC is allocated to $v_1$ with the maximum MR. Fig. \ref{fig:RD}(a) shows the result of the first iteration and the result of the following iterations are shown in Fig. \ref{fig:RD}.

Fig. \ref{fig:RD} presents the subtle manipulation of the influence spread by {\algo} with the investment in seeds and SCs, which cannot be accomplished by previous works \cite{Kempe2003Maximizing,chen2010scalable,chen2009efficient,goyal2011data,jung2012irie,leskovec2007cost,nguyen2016cost,nguyen2016stop,ohsaka2014fast,song2015influence,tang2018online,tang2017influence,tang2015influence,tang2014influence,zhou2013ublf,tang2017profit,tang2018towards,zhu2018host,zhang2016profit,khan2016revenue,lu2012profit,zhu2013influence}. IM \cite{Kempe2003Maximizing,chen2010scalable,chen2009efficient,goyal2011data,jung2012irie,leskovec2007cost,nguyen2016cost,nguyen2016stop,ohsaka2014fast,song2015influence,tang2018online,tang2017influence,tang2015influence,tang2014influence,zhou2013ublf} selects the most influential user $v_3$ as the seed. However, the seed cost $c_{seed}(v_3)$ is not considered and can be arbitrarily large. Although PM \cite{tang2017profit,tang2018towards,zhu2018host,zhang2016profit,khan2016revenue,lu2012profit,zhu2013influence} takes the seed cost into consideration, it cannot decide the user (e.g., $v_2$ or $v_3$ in Fig. \ref{fig:RD}(a)) to allocate SCs (even applying existing coupon strategies), and it may result in a miserable redemption rate.

\end{example}

\subsubsection{Guaranteed Paths Identification (GPI)}
For the SC allocation trade-off, GPI constructs the {\em Guaranteed Paths (GP)} to identify the inactive but possibly influenced users based on the result of ID $D^* = \{S^*,I^*,K(I^*)\}$. For each seed $s \in S^*$, the possibly influenced inactive users are restrained by the remaining budget (i.e., subtracting the investment budget by the seed cost). Moreover, for the GP to a possibly influenced inactive user, all edges are independent edges to guarantee the highest influence probability. Nonetheless, existing strategies like unlimited and limited coupon strategies cannot identify these users, and it thereby tends to influence those with high benefits.

For each seed $s \in S^*$, GPI constructs the GP $g(s,v_i)$ from $s$ to each possibly influenced inactive user $v_i$ as follows. GPI first marks $s$ as visited, and it then traverses the descendants of $s$ in DFS manner. For each visited node, the traversal starts from its child with the highest to the lowest influence probability. When a user $v_i$ is visited, GPI includes the following nodes in a temporary set $\hat{g}$, including the visited siblings of both $v_i$ and $v_i$'s ascendants. It then sets $\hat{c}$ as the total expected SC cost of all users in $\hat{g}$. If $\hat{c} \leq B_{inv} - c_{seed}(s)$, 1) $g(s,v_i)$ is set to $\hat{g}$ with a guaranteed cost $c_{s,v_i} = \hat{c}$, 2) for each user $v_j$ in $\hat{g}$, $K_j$ is set to the number of visited children (i.e., each user could receive an SC), 3) it updates the expected benefit $b_{s,v_i}$, and 4) GPI proceeds to traverse $v_i$'s children. Otherwise, GPI stops traversing $v_i$'s children and unvisited siblings (i.e., siblings with influence probability lower than $v_i$'s), and it traverses back to the next sibling of $v_i$'s parent (i.e., the sibling next to $v_i$'s parent in the descending order of influence probability). The process stops when no more user can be visited (i.e., $\hat{c} > B_{inv} - c_{seed}(s)$).

More specifically, for each user $v_i^l$ in the $l$-th level, let $U_s^{{\hat{l}}}$ denote the set of visited siblings in the ${\hat{l}}$-th level before visiting $v_i^l$ by the traversal starting from a seed $s$. A guaranteed path from $s$ to $v_i^l$ is $g(s,v_i^l)=\{v_j | \forall v_j \in  U_s^{{\hat{l}}},{\hat{l}} \leq l\}$. For each $g(s,v_i^l)$, let $\hat{K}(I)=\{\hat{K}_j | \forall v_j \in g(s,v_i^l) \}$ denote its SC allocation, where $\hat{K}_j$ is the SC constraint of $v_j$ in $g(s,v_i^l)$. Moreover, $\hat{K}_j$ equals to the number of $v_j$'s visited children, and the guaranteed cost is $c_{s,v_i^l}=C_{sc}(\hat{K}(I))$ (i.e., the expected SC cost that each user in $g(s,v_i^l)$ could receive an SC).

% and the guaranteed cost is $c_{s,v_i^l}=\sum_{v_j \in U_s^{{\hat{l}}}, {\hat{l}} < l, v_{\hat{j}} \in N(v_j) \cap g(s,v_i^l)} c_{sc}(\hat{j}) P(e(j,\hat{j}))$ (i.e., the expected SC cost of that each user in $g(s,v_i^l)$ could receive an SC).

\begin{figure}[t]
    \centering
    \includegraphics[height=1.2in,width=3.5in]{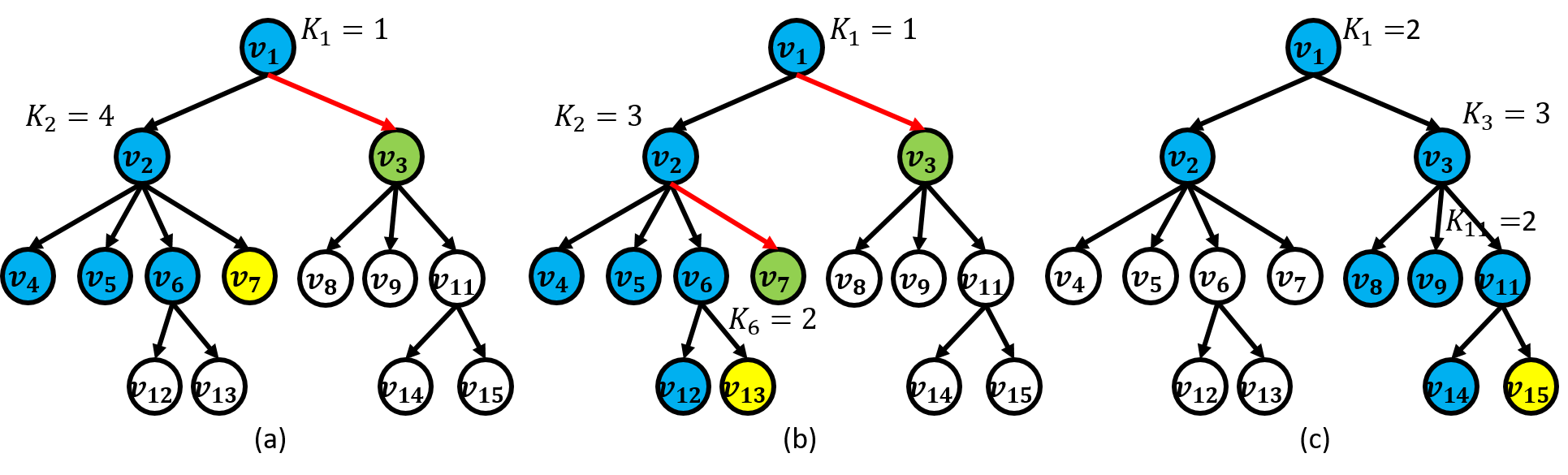}
    \vspace{-8mm} 
    \caption{Examples of guaranteed paths. (a) $g(v_1,v_7)$. (b) $g(v_1,v_{13})$. (c) $g(v_1,v_{15})$}.
    \label{fig:GP}
\end{figure}

\begin{example}
Fig. \ref{fig:GP} presents three examples of GP, and the users of the GPs are marked in blue and yellow (end of the path). For each node, the children on the left side has a higher influence probability. In Fig. \ref{fig:GP}(a), $g(v_1,v_7)=\{v_1,v_2,v_4,v_5,v_6,v_7\}$, and Fig. \ref{fig:GP}(b) and \ref{fig:GP}(c) present $g(v_1,v_{13})$ and $g(v_1,v_{15})$, respectively. Moreover, the expected benefit of a GP involves not only the visited users but also the users connected by the dependent edges (marked in green).
GP is a novel structure specially designed for the SC environment, which is not included in the previous works \cite{Kempe2003Maximizing,chen2010scalable,chen2009efficient,goyal2011data,jung2012irie,leskovec2007cost,nguyen2016cost,nguyen2016stop,ohsaka2014fast,song2015influence,tang2018online,tang2017influence,tang2015influence,tang2014influence,zhou2013ublf,tang2017profit,tang2018towards,zhu2018host,zhang2016profit,khan2016revenue,lu2012profit,zhu2013influence}. GP can identify the possibly influenced inactive users with high benefits and then guide the influence spread for reaping the benefits. However, existing coupon strategies (limited and unlimited) cannot obtain the benefits since they either apply over-simplified methods or ignore the SC allocation.
\end{example}

\subsubsection{SC Maneuver (SCM)}
To optimize the redemption rate, SCM examines the opportunity to maneuver the allocated SCs (i.e., $K(I^*)$) for creating new spreads based on the derived GPs $\mathcal{P}$. SCM introduces {\em Amelioration Index (AI)} and {\em Deterioration Index (DI)} to evaluate the amelioration and the deterioration of redemption rate by allocating and retrieving SCs, respectively. For each $g(s,v_i) \in \mathcal{P}$, AI $I_a(g(s,v_i))$ is the ratio of the expected benefit gains $B_a(g(s,v_i))$ to the expected SC cost of the allocated SCs $C_a(g(s,v_i))$. For retrieving $k$ SCs from $v_j$, DI $I_d(\Delta_{v_j}(k))$ is the ratio of the expected benefit loss to the expected SC cost of the retrieved SCs. Let $m$ and $M$ denote a maneuver operation and a set of maneuver operations (detailed later). To evaluate the efficiency of maneuvering SCs, SCM introduces {\em Maneuver Gap (MG)} $\beta_{g(s,v_i)}^{m,M}$ which is the ratio of the benefit gap to the cost gap. The benefit gap $B^{m,M}_{g(s,v_i)}$ and the cost gap $C_{g(s,v_i)}^{m,M}$ are the difference between the expected benefit gain and the expected SC cost before and after including $m$ in $M$. Equipped with the above notions, {\algo} can examine the improvement of creating each GP, the debasement of redemption rate by retrieving SCs, and the efficiency of maneuvering SCs, whereas these fine-grained operations are not considered in the previous works \cite{Kempe2003Maximizing,chen2010scalable,chen2009efficient,goyal2011data,jung2012irie,leskovec2007cost,nguyen2016cost,nguyen2016stop,ohsaka2014fast,song2015influence,tang2018online,tang2017influence,tang2015influence,tang2014influence,zhou2013ublf,tang2017profit,tang2018towards,zhu2018host,zhang2016profit,khan2016revenue,lu2012profit,zhu2013influence}.

SCM calculates $I_a(g(s,v_i))$ for each $g(s,v_i) \in \mathcal{P}$. From the largest to the smallest $I_a(g(s,v_i))$, if 1) its guaranteed cost $c_{s,v_i}$ does not exceed the total invested SC cost $C_{sc}(K(I^*))$ and 2) $v_i$ cannot be activated by $D^*$ ($K_p \in K(I^*) = 0$, where $v_p$ is $v_i$'s parent), SCM then determines whether to create $g(s,v_i)$ as follows. Let $v_{i^*}$ and $\delta K$ denote the nearest possibly activated ascendant of $v_i$ by $D^*$ and the difference between the total number of SCs allocated in $g(s,v_i)$ and $g(s,v_{i^*})$, respectively. Let $M^*$ denote the set of maneuver operations, where a maneuver operation $m$ includes a DI $I_d(\Delta_{v_j}(k))$, maneuver mapping $\mathcal{K}=\{K_i^j | v_i, v_j \in V \}$ (e.g., $K_i^j=3$ represents maneuvering 3 SCs from $v_j$ to $v_i$), the index of the last user with SCs maneuvered. If the total number of the maneuvered SCs $\sum_{\forall \mathcal{K} \in M^*, \forall K_i^j \in \mathcal{K}} K_i^j < \delta K$, SCM calculates the DIs of all $v_j \in I^*$ and derives a set of corresponding maneuver operations $M$. From the maneuver operation $m \in M$ with the smallest to the largest DI $I_d(\Delta_{v_j}(k))$, if $I_d(\Delta_{v_j}(k)) < \beta_{g(s,v_i)}^{m, M^*}$ and the redemption rate is improved by $\mathcal{K} \in m$, SCM includes $m$ in $M^*$ and proceeds to find the next maneuver operation. If $I_d(\Delta_{v_j}(k)) \geq \beta_{g(s,v_i)}^{m, M^*}$ or the redemption rate is not improved, SCM skips this GP and proceeds to examine the next GP. When $\sum_{\forall \mathcal{K} \in M^*, \forall K_i^j \in \mathcal{K}} K_i^j = \delta K$, $g(s,v_i)$ is created based on $M^*$. The process ends after all GPs in $\mathcal{P}$ are examined.

SCM evaluates the DIs and derives the set of corresponding maneuver operations as follows. Let $K_j$ and $\hat{K_j}$ denote $v_j$'s SC allocation of $K(I^*)$ and $g(s,v_i)$, respectively. For each $v_j \in I^*$, if it has spare SCs for creating $g(s,v_i)$ (i.e., $K_j > \hat{K}_j$), SCM calculates DIs by retrieving all possible numbers $k$ of SCs ($1 \leq k \leq K_j - \hat{K}_j$), as opposite to the derivation of MR (i.e., retrieving instead of adding SCs). Let $\hat{i^*}$ denote the index of user whom SCM maneuvers SCs to, which starts from the nearest possibly activated ascendant of $v_i$. Let $\mathcal{S}$ denote the sum of SCs already maneuvered to $v_{\hat{i^*}}$ according to the current $M^*$ (i.e., $\sum_{\hat{j} \in \hat{\mathcal{K}}} K_{\hat{i^*}}^{\hat{j}}$, where $\hat{\mathcal{K}}$ is the union of all $\mathcal{K} \in M^*$). SCM builds the maneuver mapping $\mathcal{K}$ by splitting the $k$ SCs to $v_{\hat{i^*}}$ and his descendants starting from $v_{\hat{i^*}}$. First, SCM sets $K^j_{\hat{i^*}} = \hat{K}_{\hat{i^*}} - (K_{\hat{i^*}}+\mathcal{S})$ (the number of SCs required for filling $v_{\hat{i^*}}$'s SC allocation to $\hat{K}_{\hat{i^*}}$) if $k$ is large enough. Otherwise, $K^j_{\hat{i^*}} =k$. Then if the total cost after retrieving $K^j_{\hat{i^*}}$ SCs of $v_j$ to $v_{\hat{i^*}}$ is less than $B_{inv}$, $K^j_{\hat{i^*}}$ is included in $\mathcal{K}$. If the number of remaining SCs $\Delta k > 0$, SCM moves to the next target (i.e., set $\hat{i^*}$ to the index of $v_{\hat{i^*}}$'s child with SCs in $g(s,v_i)$). When the mapping of $k$ SCs are determined (i.e., $\Delta k = 0$), the maneuver operation $\{I_d(\Delta_{v_j}(k)), \mathcal{K}, \hat{i^*}\}$ is included in the candidate maneuver operation set $M$. The process stops when all possible $v_j$ are examined.

Specifically, for each $g(s,v_i)$, let $v_j$ denote $v_i$'s nearest activated ascendant and its AI $I_a(g(s,v_i))=\frac{B_a(g(s,v_i))}{C_a(g(s,v_i))}$, where $B_a(g(s,v_i))=b_{s,v_i} - b_{s,v_j}$ and $C_a(g(s,v_i))=c_{s,v_i} - c_{s,v_j}$. Let $b^M_j$ and $c^M_j$ denote the benefit gain and cost gain by maneuvering SCs to $v_j$ and its descendant according to $M$, respectively. The benefit gap and the cost gap are defined as $B^{m,M}_{g(s,v_i)}=b_j^{m \cup M} - b_j^{M}$ and $C_{g(s,v_i)}^{m,M}=c_j^{m \cup M} - c_j^{M}$.

\begin{figure}[t]
    \centering
    \includegraphics[height=1.7in,width=3.5in]{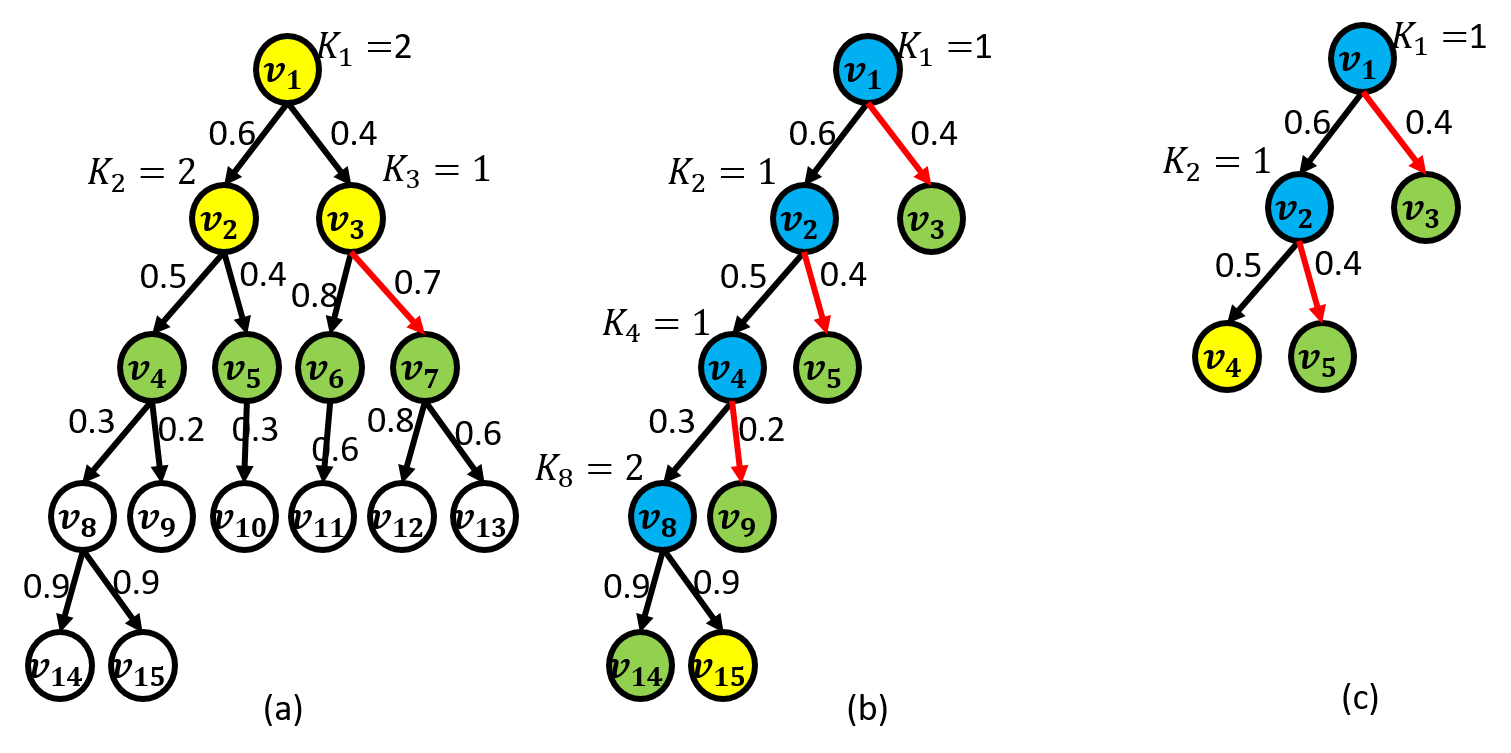}
    \vspace{-3mm} 
    \caption{Example of a maneuver operation. (a) $K_1=2$, $K_2=2$, and $K_3=1$. (b) $g(v_1,v_{15})$. (c) $g(v_1,v_{4})$}
    \label{fig:maneuver}
\end{figure}

\begin{example}
Fig. \ref{fig:maneuver} presents an example of SCM. Fig. \ref{fig:maneuver}(a) is the result of Fig. \ref{fig:RD}(d) attached with two high benefit (50 for each) but inactive users $v_{14}$ and $v_{15}$ with low SC costs. Fig. \ref{fig:maneuver}(b) is $g(v_1,v_{15})$ with $b_{v_1,v_{15}}=10.41$ and $c_{v_1,v_{15}}=2.66$, whereas Fig. \ref{fig:maneuver}(c) presents $g(v_1,v_{4})$ with $b_{v_1,v_{4}}=2.18$ and $c_{v_1,v_{4}}=1.46$. The largest AI is $I_m(g(v_1,v_{15}))=B_m(g(v_1,v_{15})) / C_m(g(v_1,v_{15}))=8.23/1.2=6.86$, and the cost of $g(v_1,v_{15})$ is less than the total cost derived from ID (i.e., $2.66<2.84$). In the first iteration, the sorted DIs are $I_d(\Delta_{v_3}(1))=0.38/0.94=0.4$, $I_d(\Delta_{v_2}(1))=0.12/0.2=0.6$, and $I_d(\Delta_{v_1}(1))=0.47/0.24=1.94$, and thus the only SC of $v_3$ is maneuvered to $v_4$. Then, in the second iteration, the sorted DIs, are $I_d(\Delta_{v_2}(1))=0.12/0.2=0.6$ and $I_d(\Delta_{v_1}(1))=0.24/0.24=1$, and an SC of $v_2$ is maneuvered to $v_8$. Finally, in the last iteration, an SC is maneuvered from $v_1$ to $v_8$, and the redemption rate is improved by $380\%$ (i.e., $3.91/1.03=3.8$).
By carefully examining AI, DI, and MG, {\algo} further optimizes the redemption rate by reaping the high benefits of users guided by GPs. {\algo} shapes the influence spread according to the features of users, which cannot be accomplished by the previous works  \cite{Kempe2003Maximizing,chen2010scalable,chen2009efficient,goyal2011data,jung2012irie,leskovec2007cost,nguyen2016cost,nguyen2016stop,ohsaka2014fast,song2015influence,tang2018online,tang2017influence,tang2015influence,tang2014influence,zhou2013ublf,tang2017profit,tang2018towards,zhu2018host,zhang2016profit,khan2016revenue,lu2012profit,zhu2013influence} applying any existing coupon strategies.

\end{example}
\begin{algorithm}[t]
\algsetup{linenosize=\footnotesize}
\footnotesize
%\captionsetup{font={footnotesize}}
\caption{Seed Selection and SC allocation Algorithm} \label{algo:mainAlgo}
\begin{algorithmic}[1]
    \STATE Initialize a priority queue $Q$, $\gamma_i=1$, $\forall v_i \in \hat{V}$, $\hat{V} \leftarrow V$
    \WHILE{$\hat{V} \neq \emptyset$ \textbf{and} there exists a positive MR}
        \STATE $\hat{\tau} \leftarrow 0$, $\hat{i} \leftarrow \emptyset$
        \FORALL{$v_i \in \hat{V}$}
            \STATE \textbf{if} $\tau_{\hat{S},\hat{I}}^{v_i,\gamma_i} > \hat{\tau}$ \textbf{and} $C_{seed}(v_i) + C_{sc}(\{K_i=1\}) \leq B_{inv}$
            
                \STATE \textbf{then} $\hat{\tau} \leftarrow \tau_{\hat{S},\hat{I}}^{v_i,\gamma_i}$, $\hat{i} \leftarrow i$
        \ENDFOR
        \STATE \textbf{if} $\gamma_{\hat{i}} = 1$ \textbf{then} $\gamma_{\hat{i}} \leftarrow 0$, push $v_i$ to $Q$
        \STATE \textbf{otherwise} $K_i \leftarrow 1$, $\hat{V} \leftarrow \hat{V} \setminus v_{\hat{i}}$
    \ENDWHILE
    
    \STATE $v_i \leftarrow $ pop $Q$, $D \leftarrow \{\hat{S} \leftarrow v_i, \hat{I} \leftarrow v_i, K(\hat{I}) \leftarrow K_i\}$, $v_p \leftarrow $ pop $Q$, $R \leftarrow$ redemption rate of $v_p$ with $K_p$
    \WHILE{$Q \neq \emptyset$ \textbf{and} there exists a positive MR}
    \STATE $\hat{\tau} \leftarrow R$, $\hat{i} \leftarrow p$

        \FORALL{$v_i \in \hat{I}$} 
            \STATE $\Delta K(\hat{I}) \leftarrow $ increase $K_i$ in $K(\hat{I})$ by 1, $\gamma_i \leftarrow 0$
            \STATE \textbf{if} $\tau_{\hat{S},\hat{I}}^{v_i,\gamma_i} > R$ \textbf{and} $C_{seed}(\hat{S}) + C_{sc}(\Delta K(\hat{I})) \leq B_{inv}$
                \STATE \textbf{then} $R \leftarrow \tau_{\hat{S},\hat{I}}^{v_i,\gamma_i}$, $\hat{i} \leftarrow i$
        \ENDFOR

        \FORALL{$v_i \notin \hat{I}$ \textbf{and} $v_i \in V$ is influenced}
            \STATE $\Delta \hat{I} \leftarrow \hat{I} \cup v_i$, $K(\Delta \hat{I}) \leftarrow K(\hat{I}) \cup K_i=1$, $\gamma_i \leftarrow 0$
            \STATE \textbf{if} $\tau_{\hat{S},\Delta \hat{I}}^{v_i,\gamma_i} > R$ \textbf{and} $C_{seed}(\hat{S}) + C_{sc}(K(\Delta \hat{I})) \leq B_{inv}$
                \STATE \textbf{then} $R \leftarrow \tau_{\hat{S},\Delta \hat{I}}^{v_i,\gamma_i}$, $\hat{i} \leftarrow i$
        \ENDFOR
        \STATE \textbf{if} $\hat{i} = p$ \textbf{and} $C_{seed}(\hat{S} \cup v_p) + C_{sc}(K(\hat{I} \cup v_p)) \leq B_{inv}$ 
        \STATE \textbf{then} $\hat{S} \leftarrow \hat{S} \cup \hat{i}$, $\hat{I} \leftarrow \hat{I} \cup \hat{i}$, update $K(\hat{I})$, $v_p \leftarrow$ pop $Q$
        
        \STATE \textbf{else} $\hat{I} \leftarrow \hat{I} \cup \hat{i}$, update $K(\hat{I})$

        \STATE $D \leftarrow D \cup \{\hat{S}, \hat{I}, K(\hat{I})\}$
    \ENDWHILE
    \STATE $D^* \in D$ with the maximum redemption rate, $\mathcal{P} \leftarrow \emptyset$
    
    \STATE $\mathcal{P} \leftarrow$ call GPI procedure (Alg. \ref{algo:GPI}) with input $\{G,D^*,B_{inv}\}$
    
    % \FORALL{$s \in S^*$}
    %     \STATE Push $s$ in stack $\Psi$, $U^{l_i}_s \leftarrow \emptyset$, $visited[v_i] \leftarrow false$, $\forall i$
    %     \WHILE{$\Psi \neq \emptyset$}
    %         \STATE $v_i \leftarrow$ pop $\Psi$
    %         % \STATE \textbf{if} $visited[v_i] = false$ \textbf{then} $\hat{g} \leftarrow \{v_i\} \cup \{ v_j | \forall v_j \in U^{l_j}_{s}, \forall l_j \leq l_i \}$, $\hat{c} \leftarrow \sum_{v_j \in \hat{g}}c_{sc}(v_j)$
    %         \IF{$visited[v_i] = false$}
    %             \STATE $\hat{g} \leftarrow \{v_i\} \cup \{ v_j | \forall v_j \in U^{l_j}_{s}, \forall l_j \leq l_i \}$, $\hat{c} \leftarrow \sum_{v_j \in \hat{g}}c_{sc}(v_j)$
    %             \IF{$\hat{c} \leq B_{inv} - c_{seed}(s)$}
    %                 \STATE $visited[v_i] \leftarrow true$, $g(s,v_i) \leftarrow \hat{g}$, $c_{s,v_i} \leftarrow \hat{c}$ 
    %                 \STATE $U^{l_i}_s \leftarrow U^{l_i}_s \cup v_i$, $b_{s,v_i} \leftarrow \sum_{v_j \in \hat{g}}b(v_j)$
    %                 \STATE $\mathcal{P} \leftarrow \mathcal{P} \cup \{ g(s,v_i), c_{s,v_i}, b_{s,v_i}\}$, push $v_j \in N(v_i)$ in $\Psi$ ascendant order of $p(i,j)$
                    
    %             \ENDIF
    %     \ENDIF
    %     \ENDWHILE
    % \ENDFOR
    
    \STATE Sort $I_a(g(s,v_i))$ in descending order: $I^1_a$, $I^2_a$,$\cdots$,$I^{|\mathcal{P}|}_a$; $v_p$ is $v_i$'s parent
    
    % \STATE Calculate $I_a(g(s,v_i))$ for each $g(s,v_i) \in \mathcal{P}$
    
    % \STATE Sort them in descending order: $I^1_a$, $I^2_a$,$\cdots$,$I^{|\mathcal{P}|}_a$
    
    \FOR{$I_a^n(g(s,v_i))$, $n=1$ to $|\mathcal{P}|$}
        \IF{$c_{s,v_i} \leq C_{sc}(K(I^*))$ \textbf{and} $K_p \in K(I^*) = 0$}
        
            \STATE $v_{i^*} \leftarrow$ the nearest possibly activated ascendant of $v_i$ 

            \STATE $\delta K \leftarrow \sum_{v_j \in g(s,v_i)} K_j - \sum_{v_j \in g(s,v_{i^*})} K_j$ 
            \STATE $M^* \leftarrow \{0,\mathcal{K}=\emptyset,i^*\}$
            
            \WHILE{$\sum_{\forall \mathcal{K} \in M^*, \forall K_i^j \in \mathcal{K}} K_i^j < \delta K$}
                \STATE $M \leftarrow$ call DIMD procedure (Alg. \ref{algo:DIMD}) with input \{$G$, $D^*$, $g(s,v_i)$, $M^*$\}
                
                \FOR{$m=\{I_d(\Delta_{v_j}(k)), \mathcal{K}, \hat{i^*}\} \in M$ in ascending order according to $I_d(\Delta_{v_j}(k))$}
                    
                    \STATE \textbf{if} $I_d(\Delta_{v_j}(k)) < \beta_{g(s,v_i)}^{m, M^*}$ \textbf{and} redemption rate increases 
                        \STATE \textbf{then} $M^* \leftarrow M^* \cup m$, $i^* \leftarrow \hat{i^*}$, break for-loop

                    \STATE \textbf{if} All $I_d(\Delta_{v_j}(k)) > \beta_{g(s,v_i)}^{m, M^*}$ \textbf{or} redemption rate decreases
                    
                        \STATE \textbf{then} Skip this GP, proceed to the next GP
                \ENDFOR
            \ENDWHILE
            \STATE Update $D^*$ according to $M^*$
        \ENDIF
    \ENDFOR
\end{algorithmic}
\end{algorithm}

\begin{algorithm}[t]
\algsetup{linenosize=\footnotesize}
\footnotesize
% \captionsetup{font={footnotesize}}
\caption{Guaranteed Paths Identification (GPI)}\label{algo:GPI}
{\bf Input:} $G=\{V,E\}$, $D^*=\{S^*,I^*,K(I^*)\}$, and $B_{inv}$ \\
{\bf Output:} $\mathcal{P}$
\begin{algorithmic}[1]
    \FORALL{$s \in S^*$}
        \STATE Push $s$ in stack $\Psi$, $U^{l_i}_s \leftarrow \emptyset$, $visited[v_i] \leftarrow false$, $\forall i$
        \WHILE{$\Psi \neq \emptyset$}
            \STATE $v_i \leftarrow$ pop $\Psi$
            % \STATE \textbf{if} $visited[v_i] = false$ \textbf{then} $\hat{g} \leftarrow \{v_i\} \cup \{ v_j | \forall v_j \in U^{l_j}_{s}, \forall l_j \leq l_i \}$, $\hat{c} \leftarrow \sum_{v_j \in \hat{g}}c_{sc}(v_j)$
            \IF{$visited[v_i] = false$}
                \STATE $\hat{g} \leftarrow \{v_i\} \cup \{ v_j | \forall v_j \in U^{l_j}_{s}, \forall l_j \leq l_i \}$, $\hat{c} \leftarrow \sum_{v_j \in \hat{g}}c_{sc}(v_j)$
                \IF{$\hat{c} \leq B_{inv} - c_{seed}(s)$}
                    \STATE $visited[v_i] \leftarrow true$, $g(s,v_i) \leftarrow \hat{g}$, $c_{s,v_i} \leftarrow \hat{c}$ 
                    \STATE $U^{l_i}_s \leftarrow U^{l_i}_s \cup v_i$, $b_{s,v_i} \leftarrow \sum_{v_j \in \hat{g}}b(v_j)$
                    \STATE $\mathcal{P} \leftarrow \mathcal{P} \cup \{ g(s,v_i), c_{s,v_i}, b_{s,v_i}\}$, push $v_j \in N(v_i)$ in $\Psi$ ascendant order of $p(i,j)$
                    
                \ENDIF
        \ENDIF
        \ENDWHILE
    \ENDFOR
\end{algorithmic}
\end{algorithm}

\begin{algorithm}[t]
\algsetup{linenosize=\footnotesize}
\footnotesize
\caption{Derivation of DI and Maneuver Operation (DIMD)}\label{algo:DIMD}
{\bf Input:} $G=\{V,E\}$, $D^*=\{S^*,I^*,K(I^*)\}$, $g(s,v_i)$, and $M^*$ \\
{\bf Output:} $M$
\begin{algorithmic}[1]
    \STATE $M \leftarrow \emptyset$, $i^* \leftarrow$ $i^*$ of the last attached element of $M^*$
    \FORALL{$v_j \in I^*$}
        \IF{$K_j > \hat{K}_j$}
            \FOR{$k=1$ to $K_j - \hat{K}_j$}
                \STATE $\Delta K(I^*) \leftarrow$ $K(I^*)$ with $K_j$ decreased by $k$
                \STATE $I_d(\Delta_{v_j}(k)) \leftarrow \frac{B(S^*,K(I^*)) - B(S^*,\Delta K(I^*))}{C_{sc}(S^*,K(I^*)) - C_{sc}(S^*,\Delta K(I^*))}$
                \STATE $\mathcal{K} \leftarrow \emptyset$, $\Delta k \leftarrow k$, $\hat{i^*} \leftarrow i^*$, $\hat{\mathcal{K}} \leftarrow$ union of all maneuver mapping in $M^*$, $\mathcal{S} \leftarrow \sum_{\hat{j} \in \hat{\mathcal{K}}} K_{\hat{i^*}}^{\hat{j}}$
                
                \WHILE{$K_{\hat{i^*}} + \mathcal{S} < \hat{K}_{\hat{i^*}}$ and $\Delta k > 0$}
                    \STATE \textbf{if} $\hat{K}_{\hat{i^*}} - (K_{\hat{i^*}}+\mathcal{S}) < \Delta k$
                        \STATE \textbf{then} $K^j_{\hat{i^*}} \leftarrow \hat{K}_{\hat{i^*}} - (K_{\hat{i^*}}+\mathcal{S})$, $\Delta k \leftarrow \Delta k - K^j_{\hat{i^*}}$
                        
                    \STATE \textbf{if} $\hat{K}_{\hat{i^*}} - (K_{\hat{i^*}}+\mathcal{S}) = \Delta k$
                        \STATE \textbf{then} $K^j_{\hat{i^*}} \leftarrow \Delta k$, $\Delta k \leftarrow 0$
                    \STATE \textbf{if} the total cost after retrieving $K^j_{\hat{i^*}}$ SCs from $v_j$ to $v_{\hat{i^*}}$ is less than $B_{inv}$
                    
                        \STATE \textbf{then} $\mathcal{K} \leftarrow \mathcal{K} \cup K^j_{\hat{i^*}}$. 1) If $\Delta k \neq 0$, $\hat{i^*} \leftarrow $ the descendant of $v_{\hat{i^*}}$ with SCs in $g(s,v_i)$. 2) If $\Delta k = 0$, $M \leftarrow M \cup \{I_d(\Delta_{v_j}(k)), \mathcal{K}, \hat{i^*}\}$
                        \STATE \textbf{Else} Break the while-loop
                \ENDWHILE
            \ENDFOR
        \ENDIF
    \ENDFOR
\end{algorithmic}
\end{algorithm}
\textblue{
Algorithm 1 presents the pseudo-code of {\algo} with three phases: 1) Investment Deployment (ID) (Line 1--24), 2) Guaranteed Paths Identification (GPI) (Line 25--34), and 3) SC Maneuver (SCM) (Line 35--48). Specifically, ID first identifies the {\em pivot source} by iteratively selecting a user $v_i$ with the maximum positive marginal redemption (MR) and adding it to a queue $Q$ prioritized by redemption rate (Line 4--7). 
% Moreover, when an SC is allocated to a seed in $Q$, the redemption rate is updated.
Then, the initial seed is popped from $Q$ as the initial investment deployment $D$, and the pivot source $v_p$ is popped from $Q$ to compare the MR of users not in $Q$ (Line 9). Subsequently, ID deploys the investment by iteratively examining the MR of each user and the redemption rate of $v_p$ (Line 10--23). Three strategies are applied to broaden and deepen the influence (Line 12--15, Line 16--19) by allocating an SC to $v_i \in \hat{I}$, $v_i \notin \hat{I}$, and then activate a new seed (Line 20--21). The user with the maximum value of MR or redemption rate is invested by an SC or assigned as a seed.
% If the maximum MR of $v_i$ is larger than the redemption rate of $v_p$, ID invests an SC in $v_i$ (i.e., increases $K_i$ by 1) and includes $v_i$ in $\hat{I}$; otherwise, ID selects a new seed $v_p$ and includes it in $\hat{S}$ and $\hat{I}$, and pops the next pivot source from $Q$ (Line 21). If the user $v_i$ has the maximum MR larger than the redemption rate of $v_p$, ID invests an SC in $v_i$ (i.e., increases $K_i$ by 1) and includes $v_i$ in $\hat{I}$ if $v_i \notin \hat{I}$ (Line 22); otherwise, ID initiates a new source $v_p$ and includes $v_p$ in $\hat{S}$ and $\hat{I}$, and pops the next pivot source from $Q$. The process stops when the total cost exceeds the investment budget $B_{inv}$.

Moreover, the guaranteed paths GPs are identified by the GPI procedure (Alg. \ref{algo:GPI})
%(detailed in [27]) 
in a Depth-First Search manner. On the same level, GPI traverses users in ascending order of their influence probabilities (Line 25). \textredTech{In Alg. \ref{algo:GPI},} GPI traverses from each seed $s$ in $S^*$ to find GPs $g(s,v_i)$ to each $v_i$, where $g(s,v_i)$ is the set of the visited siblings of both $v_i$ and $v_i$'s ascendants \textredTech{(Line 6).}
% {\color{red}(Line 1--10 of Algorithm \ref{algo:GPI})}.
When GPI visits $v_i$, if $g(s,v_i)$'s cost (total SC cost of users in $g(s,v_i)$) is smaller than the budget, $g(s,v_i)$ is included in a GP set $\mathcal{P}$.
% {\color{red}(Line 7--8 of Algorithm \ref{algo:GPI})}.
Otherwise, GPI stops traversing $v_i$'s children and unvisited siblings, and it traverses back to the next sibling of $v_i$'s parent. The process stops when no more user can be visited \textredTech{(Line 7--10)}.

Finally, SCM determines whether to create each $g(s,v_i) \in \mathcal{P}$ based on the Amelioration Index (AI) $I_a(g(s,v_i))$, which is the ratio of the expected benefit gain to the expected SC cost (Line 27--39). SCM examines $I_a(g(s,v_i))$ in descending order and checks the Deterioration Index (DI) $I_d(\Delta_{v_j}(k))$ of each influenced user $v_j$, which is the ratio of the expected benefit loss to the expected SC cost by retrieved $\Delta_{v_j}(k)$ SCs, according to the DIMD procedure 
%(detailed in [27]) 
(Line 33). 
\textredTech{Specifically, in the DIMD procedure (Alg. \ref{algo:DIMD}), let $K_j$ and $\hat{K_j}$ denote $v_j$'s SC allocation of $K(I^*)$ and $g(s,v_i)$, respectively. For each $v_j \in I^*$, if it has spare SCs for creating $g(s,v_i)$ (i.e., $K_j > \hat{K}_j$), DIMD calculates DIs by retrieving all possible numbers $k$ of SCs ($1 \leq k \leq K_j - \hat{K}_j$) (Line 2--15). SCM sets $K^j_{\hat{i^*}}$, which is the number of maneuvered SCs from $v_j$ to the target user $v_{\hat{i^*}}$, depending on $\Delta k$ (Line 9--12). Subsequently, if the total cost after retrieving $K^j_{\hat{i^*}}$ SCs of $v_j$ to $v_{\hat{i^*}}$ is less than $B_{inv}$, $K^j_{\hat{i^*}}$ is included in $\mathcal{K}$. Then, if the number of remaining SCs $\Delta k > 0$, SCM moves to the next target. Otherwise $\Delta k = 0$, the maneuver operation $\{I_d(\Delta_{v_j}(k)), \mathcal{K}, \hat{i^*}\}$ is included in the candidate maneuver operation set $M$ (Line 13--14). DIMD stops when all possible $v_j$ are examined and returns to Alg. \ref{algo:mainAlgo}. 
}
DIMD returns all possible influenced users to retrieve SCs for maneuvering SCs to reduce the redemption loss. Besides, a set of the maneuver operations $M$, which records the detail of retrieve/maneuver SCs, are performed. Based on $M$, SCM decides whether to create $g(s,v_i)$ to optimize the redemption rate (Line 34--38). Then, the final investment deployment $D^*$ is updated (Line 39).}
% }
% \input{submodular}
% \textred{
\section{Performance Analysis}
\label{sec:analysis}

In the following, we first prove that the expected benefit $B(S,K(I))$ is a non-submodular function and derive the {\em reference edges} based on Observation \ref{ob:guessing}, which finds the relation between the optimal solution (\sdollar{$OPT$}) and the pivot sources derived from ID in Sec. \ref{sec:algorithm}. We analyze $B(S,K(I))$ to keep a certain level of submodularity. Subsequently, we prove that {\algo} is a \sdollar{$(1-e^{-\frac{1}{c}}-\varepsilon)$}-approximation algorithm, where $c$ is a constant. 
%The analysis is performed on the transformed problem of Lemma \ref{lemma:non-submodular} which is an edge selection problem . 

%-------Footnote for moving lemma 1 to corr
% \footnote{By adding a virtual node, $N$, connecting to each user with the influence probabilities of $0$ (non-seed users) or $1$ (seeds), activating seeds equals to that $N$ distributes SCs to seeds with seed cost. Since the analysis is conducted on an implemented deterministic graph \cite{Kempe2003Maximizing}, the influence propagation can be regarded as selecting edges from the graph. Thus,cd .. the problem is transformed to $\mathop{\arg\max}_{e \subset E} \frac{B(e)}{C(e)}$, where $e$ is the selected edges, by mapping the SC distribution to edge selection, e.g., if $v_i$ activates $v_j$, $e(i,j)$ is selected.}

\begin{lemma}\label{lemma:non-submodular}
For the {\problem} problem, the expected benefit function is non-submodular, and the cost functions are modular.
\end{lemma}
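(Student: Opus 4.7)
The plan is to split the claim into two parts: modularity of the cost functions and non-submodularity of the expected benefit. For the seed cost, observe that $C_{seed}(S) = \sum_{v_i \in S} c_{seed}(v_i)$ is a direct sum of per-user costs, so for any $S_1, S_2 \subseteq V$ one has $C_{seed}(S_1 \cup S_2) + C_{seed}(S_1 \cap S_2) = C_{seed}(S_1) + C_{seed}(S_2)$. For the SC cost, the definition $C_{sc}(K(I)) = \sum_{v_i \in I}\sum_{v_j \in N(v_i)} E[k_i, c_{sc}(v_j)]$ decomposes as an outer sum over $v_i \in I$ in which each inner term depends only on $v_i$ and its own allocation $k_i$. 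Hence $C_{sc}$ is modular over the index set of internal nodes: adding or removing a node from $I$ changes the total SC cost by a quantity determined solely by that node's local SC allocation and its out-neighborhood.

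For the non-submodularity of $B(S, K(I))$, my plan is to exhibit a minimal counterexample that exploits the fact that a downstream SC allocation only yields benefit once enough upstream SCs enable activation. Specifically, I would take a path $s \to v_1 \to v_2$ with both edge probabilities equal to $1$ and benefits $b(v_2) = 1$, $b(s) = b(v_1) = 0$. Let $A = (\{s\}, k_s = 0, k_{v_1} = 0)$ and $B = (\{s\}, k_s = 1, k_{v_1} = 0)$, and let $x$ denote the increment that sets $k_{v_1} \leftarrow 1$. Since $v_1$ is never activated when $k_s = 0$, the marginal benefit over $A$ is $0$, whereas over $B$ the increment $x$ causes $v_1$ to activate $v_2$ for a marginal benefit of $1$. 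Even though $A \subseteq B$, one obtains $B(A \cup x) - B(A) < B(B \cup x) - B(B)$, contradicting the definition of submodularity.

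The main obstacle is interpreting ``modular'' consistently for $C_{sc}$, because its per-friend term is nonlinear in $k_i$ through the factor $P(\bar{k_i})$. I would therefore be careful to phrase modularity as separability across the internal-node set $I$ rather than across individual SC-count increments, since the latter is not constant; this is also the form actually used downstream when ID iteratively deploys investments and SCM reasons about maneuver costs. A related subtlety is that the expected benefit, while provably non-submodular in general, still retains structure on restricted subdomains (e.g., when the added element cannot create or remove dependent edges), which is precisely why the paper goes on to introduce reference edges and the ``certain level of submodularity'' machinery rather than invoking a standard greedy guarantee directly.
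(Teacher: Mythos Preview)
Your proof is correct but proceeds quite differently from the paper. For non-submodularity you give a direct three-node counterexample on a path, which is more elementary than the paper's route: the paper first \emph{transforms} the whole problem into an edge-selection problem on a realized deterministic graph $G'$ (adding a virtual root $t$, treating seed selection as picking edges $e(t,s)$, and treating SC allocation as picking out-edges in order of decreasing probability), and then argues non-submodularity by a case analysis on whether the newly added edge is live or pseudo-live in $E_1\subseteq E_2$. Your single explicit instance is arguably cleaner for the bare lemma; the paper's case analysis is less concrete but simultaneously shows \emph{which} configurations break submodularity (an edge that becomes live only in the larger set).

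For modularity, your direct argument on $C_{seed}$ is the same as what the paper implicitly uses, but your treatment of $C_{sc}$ diverges. You correctly flag that the per-friend term is nonlinear in $k_i$ through $P(\bar{k_i})$, and you resolve this by restricting modularity to separability over the index set $I$ (holding each $k_i$ fixed). The paper instead sidesteps the issue entirely via the transformation: on the realized graph $G'$ there are no dependent edges, each selected edge has a fixed cost $c_{sc}(j)P(e(i,j))$ or $c_{seed}(j)P(e(i,j))$, and modularity of the total cost over \emph{edges} follows immediately. The paper's version is the one actually needed downstream, since Observation~\ref{ob:guessing}, Lemma~\ref{lemma:merge}, Lemma~\ref{lemma:decomposition}, and Theorem~2 are all phrased in this edge-selection framework; your node-level separability would not plug into those arguments without re-deriving the transformation anyway.
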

%\textred{\noindent Due to the space constraint, the proof is shown in \cite{TechReport}.}
% {\color{blue} where the detail proof can be seen in Lemma \ref{lemma:non-submodular1} of appendix }
\begin{proof}
For ease of description, we transform {\problem} to an edge selection problem as follows. We first add a virtual node $t$ to the OSN $G=\{V,E\}$, which connects to all vertices in $V$. For selecting each seed $s$, it is transformed to select the edge $e(t,s)$ and assign its weight (influence probability) to 1, whereas for each non-seed user, the weight is 0. The propagation begins from $t$, and then $t$ activates the selected seeds with their seed costs. Second, similar to \cite{Kempe2003Maximizing}, we conduct the analysis on an implemented deterministic graph $G'$. For each user $v_i$, the SC allocation (i.e., assigning $K_i$) is transformed to selecting the out-edges (friends) iteratively from the one with the highest influence probability in $G'$. The expected benefit is updated based on the influence propagation of the selected edges and the selection process stops when the total cost exceeds $B_{inv}$. Thus, the problem is transformed to $\mathop{\arg\max}_{E' \subset E} \frac{B(E')}{C(E')}$, where $E'$ is the selected edges.

For each edge $e(i,j)$, we call it a {\em live edge} if $v_i$ successfully activates $v_j$; otherwise, we call it a {\em pseudo-live edge}. For each live edge, there must exist one path from the virtual node, where all edges are live, and thus we call it a {\em live path}. Let $E_1$ and $E_2$ denote any two set of selected edges, where $E_1 \subseteq E_2 \subseteq E$. We prove that the expected benefit function $B(e)$ is non-submodular by including each edge $\hat{e} \in E \setminus E_2$ in both $E_1$ and $E_2$ with three cases as follows. 1) $\hat{e}$ is pseudo-live after being included in both $E_1$ and $E_2$. We have $B(E_2 \cup \hat{e}) - B(E_2) = B(E_1 \cup \hat{e}) - B(E_1) = 0$ since no user becomes active (i.e., no benefit gain). 2) $\hat{e}$ is live after being included in both $E_2$ and $E_1$, which implies that $\hat{e}$ is either attached to a live path existing in both $E_2$ and $E_1$ or the cause to a path turning to live. For the first case, $B(E_2 \cup e) - B(E_2) \leq B(E_1 \cup e) - B(E_1)$, whereas for the latter one, $B(E_2 \cup e) - B(E_2) \geq B(E_1 \cup e) - B(E_1)$. 3) $\hat{e}$ is live after being included in $E_2$ but pseudo-live in $E_1$, which implies that $\hat{e}$ is attached to a live path existing in $E_2$ but not in $E_1$. Thus, $B(E_2 \cup e) - B(E_2) \geq B(E_1 \cup e) - B(E_1)$.

% Since $E_1 \subseteq E_2$, the influence spread of $B(E_2 \cup e)$ is not less than that of $B(E_1 \cup e)$. Thus, $B(E_2 \cup e) - B(E_2) \geq B(E_1 \cup e) - B(E_1)$. 

For each selected edge $e(i,j)$, its cost is $c_{seed}(j) P(e(i,j))$ or $c_{sc}(j) P(e(i,j))$, when $v_i$ is the virtual node or user, respectively. Note that, $G'$ includes no dependent edge since $P(e(i,j))$ is determined, and thus the cost of non-selected edges is $0$. The total cost function $C(e)$ is the sum of the cost of the selected edges. Let $E_1$ and $E_2$ denote any two set of selected edges, where $E_1 \subseteq E_2 \subseteq E$. By including each edge $\hat{e} \in E \setminus E_2$ in both $E_1$ and $E_2$, $C(E_2 \cup \hat{e}) - C(E_2) = C(E_1 \cup \hat{e}) - C(E_1) = C(\hat{e})$, and thus the cost function $C$ is modular.
\end{proof}

\begin{observation}\label{ob:guessing}
Given a reference cost $c_{ref}$, any set of costs that includes $c_{ref}$ and a cost higher than $c_{ref}$, or two identical costs of $c_{ref}$, is not the set of costs of the optimal solution $C_{opt}$.
\end{observation}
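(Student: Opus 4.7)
The plan is to prove Observation~\ref{ob:guessing} by an exchange argument that exploits (i) the modularity of the total cost function and (ii) the greedy marginal-redemption (MR) ordering enforced by the ID phase of \algo{}. First I would pin down the operational meaning of $c_{ref}$: it is the cost attached to a distinguished pivot source that ID selects by maximum MR, so $c_{ref}$ witnesses a specific benefit-to-cost ratio that no strictly cheaper alternative strictly beats. With this anchoring in place, the observation becomes a statement that any hypothetical $C_{opt}$ containing a more expensive element alongside $c_{ref}$, or a duplicate of $c_{ref}$, can be perturbed into a solution of higher redemption rate, contradicting optimality.

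For the first forbidden configuration I would suppose for contradiction that $C_{opt}$ contains both $c_{ref}$ and some $c' > c_{ref}$. By Lemma~\ref{lemma:non-submodular} the cost function is modular, so removing the element realizing $c'$ decreases the denominator of the redemption rate by exactly $c'$. To bound the induced benefit loss in the numerator I would invoke the edge-selection reformulation and case analysis used in the proof of Lemma~\ref{lemma:non-submodular} (live edges, pseudo-live edges, and attachment to existing live paths), combined with the fact that $c_{ref}$ was selected greedily with a ratio at least as large as any candidate of cost exceeding $c_{ref}$. The resulting ratio of lost benefit to lost cost is strictly smaller than the original redemption rate, so the removal strictly improves the objective, contradicting optimality. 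The second forbidden configuration, two copies of $c_{ref}$ inside $C_{opt}$, I would handle symmetrically: viewed in MR order, the later of the two selections has marginal benefit no larger than the earlier one, so retrieving it decreases the denominator by $c_{ref}$ while reducing the numerator by no more than that smaller marginal value, again strictly improving the ratio.

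The main obstacle is the non-submodularity of $B(S, K(I))$ established by Lemma~\ref{lemma:non-submodular}, which blocks a textbook submodular exchange argument. I would circumvent this by tracking the change in benefit edge by edge on the implemented deterministic graph $G'$ introduced in that proof, splitting the effect of an element's removal into the three live/pseudo-live/newly-attached cases already classified there. A secondary subtlety is that removing one element may free SC capacity at an ancestor and flip formerly dependent edges to independent ones, so the bookkeeping must confirm that no hidden cancellation offsets the intended improvement; the pivot-source characterization of $c_{ref}$ is precisely what rules out such cancellation, since any benefit gain caused by the flip would necessarily be captured by an even cheaper candidate than $c'$, which by the greedy ordering cannot outperform $c_{ref}$.
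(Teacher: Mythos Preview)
Your proposal misreads what the observation is actually asserting and therefore tries to prove something much stronger than needed. In the paper, $c_{ref}$ is not defined as the cost of the pivot source chosen by ID's greedy MR rule; rather, $c_{ref}$ ranges over every cost in $C_{sort}$, and the analysis proceeds under the \emph{guess} that $c_{ref}=C_1$, the largest cost appearing in $C_{opt}$. Under that hypothesis the observation is immediate and purely definitional: since $C_1$ is the maximum cost in $C_{opt}$, no element of $C_{opt}$ has cost exceeding $c_{ref}$, so any set containing $c_{ref}$ together with a strictly larger cost cannot be $C_{opt}$; and when $C_1>C_2$, the value $c_{ref}=C_1$ occurs exactly once in $C_{opt}$, so any set containing two copies of $c_{ref}$ cannot be $C_{opt}$ either. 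No exchange argument, no modularity, no live-edge case analysis is required.

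Your proposed exchange argument is not only unnecessary but also incorrect as stated. You claim that deleting the element of cost $c'>c_{ref}$ strictly improves the redemption rate because ``$c_{ref}$ was selected greedily with a ratio at least as large as any candidate of cost exceeding $c_{ref}$.'' But the greedy MR ordering is a property of the \emph{algorithm's} choices, not of the structure of $OPT$; nothing prevents the element of cost $c'$ in $OPT$ from carrying a marginal benefit-to-cost ratio larger than the overall ratio $B(OPT)/C(OPT)$, in which case its removal \emph{decreases} the objective. A two-element example already breaks the argument: costs $1$ and $2$ with benefits $1$ and $3$ give ratio $4/3$, and removing the cost-$2$ element drops the ratio to $1$. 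The same objection applies to your ``duplicate $c_{ref}$'' case. The fix is not to repair the exchange argument but to recognize that the observation is a tautology once the intended guessing semantics of $c_{ref}$ are in place.
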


% \begin{observation}\label{ob:guessing}
% With a reference cost, $c_{ref}$, \sdollar{$OPT$} is not in any combination which includes $c_{ref}$ and a cost higher than $c_{ref}$ or two identical cost of $c_{ref}$.
% \end{observation}

Let $C_1$ and $C_2$ denote the first and second largest costs in $C_{opt}$. Note that the cost includes the seed cost and the SC cost since we conduct the analysis on the transformed edge selection problem. Moreover, let $C_{sort}$ denote the sorted costs (from the smallest to the largest), which has $|E'|$ elements ($|E'|$ is the number of edges of the transformed problem). For each cost in $C_{sort}$ as the reference cost $c_{ref}$, we assume it is $C_1$ with the following properties: 1) if $C_1 = C_2$, any set contains $c_{ref}$ and a cost higher than $c_{ref}$ is not $C_{opt}$, and 2) if $C_1 > C_2$, any set containing two elements of $c_{ref}$ is not $C_{opt}$.

Based on Observation \ref{ob:guessing}, we can derive the {\em reference edge set} $e_{ref}$ similar to the pivot source as follows. First, we initialize a candidate set of reference edges $\hat{e}=\emptyset$. For each cost in $C_{sort}$ as the reference cost $c_{ref}$, let $C_t = \{c | \forall c \leq c_{ref} \in C_{sort}~\text{or}~\forall c < c_{ref} \in C_{sort}\}$ denote the target set of costs, which corresponds to the assumptions $C_1 = C_2$ (i.e, $\forall c \leq c_{ref}$) and $C_1 > C_2$ (i.e., $\forall c < c_{ref}$). Then the set of edges within $C_t$ with the first two highest MRs are selected iteratively and included in $\hat{e}$. Finally, the edge set in $\hat{e}$ with the highest redemption rate is selected as $e_{ref}$. Note that $e_{ref}$ contains either two edges from the virtual node or one path from the virtual node, which corresponds to selecting two seeds or a seed allocated with an SC in {\problem}, respectively.

\begin{lemma}\label{lemma:merge}
 Within the space $C^*_t$ strictly contains $OPT$, where $c_{ref}=C_1$. if the corresponding candidate edge set $e_c$ has $\frac{B(e_c)}{C(e_c)} \geq C_0 \frac{B(OPT)}{C(OPT)}$, where $C_0 \leq 1$ is a constant, then the result of {\algo} is always no less than $(1-\epsilon)C_0 \frac{B(OPT)}{C(OPT)}$, where $\epsilon$ is an arbitrary constant.
\end{lemma}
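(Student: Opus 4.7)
The plan is to leverage the exhaustive enumeration over $C_{sort}$ that underlies the construction of the reference edge set. Since the construction tries every cost in $C_{sort}$ as a candidate $c_{ref}$, there must be one iteration in which $c_{ref}$ coincides with the true largest cost $C_1$ of $C_{opt}$. Under the hypothesis of the lemma, $C_t^*$ strictly contains $OPT$, so every non-maximum edge of $OPT$ lies in the restricted cost space, and by assumption the candidate set $e_c$ built greedily from $C_t^*$ already achieves $\frac{B(e_c)}{C(e_c)} \geq C_0 \cdot \frac{B(OPT)}{C(OPT)}$.

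The next step is to show that once the pivot source associated with $e_c$ is used to seed the ID phase of \algo, the subsequent greedy extensions by marginal redemption preserve, up to an arbitrarily small slack, the ratio attained by $e_c$. Each ID iteration either pops the next pivot with the largest redemption rate or allocates an SC to an activated user with the highest positive MR; hence the stored deployment $D$ has redemption rate at every step no smaller than the ratio of the partial structure it mirrors from $e_c$. Moreover, \algo retains the deployment with the maximum redemption rate across all iterations as $D^*$, so the guarantee from the favorable guess of $c_{ref}$ is inherited by the final answer.

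The main obstacle is the non-submodularity of $B(\cdot)$ established in Lemma~\ref{lemma:non-submodular}: the classical greedy guarantee for knapsack-constrained maximization does not apply verbatim, so the argument cannot rely on a cleanly telescoping inequality for marginal gains. I would absorb this difficulty into the $(1-\epsilon)$ slack by appealing to Observation~\ref{ob:guessing}: once the two highest-cost edges of the target solution are pinned down by $e_{ref}$, every remaining decision involves only costs strictly below $c_{ref}$, which bounds the scale of any individual gain or loss by a controllable granularity. By choosing the discretization implicit in $\epsilon$ fine enough, the cumulative additive distortion caused by non-submodular interactions along extended propagation paths is bounded by $\epsilon \cdot \frac{B(OPT)}{C(OPT)}$.

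Combining the three pieces, the favorable iteration of $c_{ref}$ yields a deployment whose redemption rate is at least $(1-\epsilon) C_0 \cdot \frac{B(OPT)}{C(OPT)}$, and since \algo reports the best deployment over all such iterations, this bound carries over to the algorithm's output, completing the proof.
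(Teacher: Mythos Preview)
Your overall skeleton---enumerate $c_{ref}$, identify the iteration where $c_{ref}=C_1$, and then argue that the algorithm's output dominates the ratio of $e_c$---matches the paper's intent. However, there is a concrete gap in how you account for the $(1-\epsilon)$ factor, and a missing structural step.

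\textbf{The source of $(1-\epsilon)$.} You attribute the $(1-\epsilon)$ slack to non-submodularity of $B(\cdot)$ and propose to ``absorb this difficulty'' by a discretization argument tied to Observation~\ref{ob:guessing}. This is not what the $(1-\epsilon)$ is for in the paper, and your argument for it is not rigorous: there is no discretization parameter in the lemma, and bounding ``cumulative additive distortion caused by non-submodular interactions'' by $\epsilon\cdot\frac{B(OPT)}{C(OPT)}$ would require a quantitative bound you never establish. In the paper, the $(1-\epsilon)$ arises solely from the fact that $B(S,K(I))$ is \emph{estimated} by Monte Carlo or reverse-greedy sampling rather than computed exactly; with enough samples the estimate is $(1-\epsilon)$-accurate, and that is the only place the slack enters Lemma~\ref{lemma:merge}. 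Non-submodularity is handled separately, downstream in Lemma~\ref{lemma:decomposition} and Theorem~2, through the $b_0c_0$ factors---it plays no role in this lemma.

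\textbf{The missing link between $e_{ref}$ and the algorithm.} Your second paragraph asserts that ID's greedy extensions ``preserve, up to an arbitrarily small slack, the ratio attained by $e_c$,'' but this is stated, not proved. The paper instead gives a concrete containment argument: it names the first two elements $s_1,s_2$ popped from $Q$, identifies the associated edges $\{e_1^u,e_1^d,e_2^u,e_2^d\}$, and argues that the two edges of $e_{ref}$ are selected by the same greedy rule and hence lie in this set, so $\frac{B(\{e_1^u,e_1^d,e_2^u,e_2^d\})}{C(\{e_1^u,e_1^d,e_2^u,e_2^d\})}\geq\frac{B(e_{ref})}{C(e_{ref})}$. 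Combined with the fact that every subsequent phase of {\algo} only accepts moves that improve the redemption rate, this yields the bound. Your proposal should replace the vague ``greedy extensions preserve the ratio'' claim with this explicit edge-containment step, and reassign the $(1-\epsilon)$ to sampling error.
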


\begin{proof}
We adopt $e_{ref}$ as the connection between the result of {\algo}, $e_c$, and OPT. First, the relation between $e_c$ and OPT is proved to find the relation between $e_{ref}$ and OPT. Then, we prove the relation between {\algo} and $e_{ref}$ and subsequently the one between {\algo} and OPT.

First, we prove that $\frac{B(e_{ref})}{C(e_{ref})} \geq C_0 \frac{B(OPT)}{C(OPT)}$ as follows. Because the set of costs of the optimal solution $C_{opt}$ never resides in $C_{sort} \setminus C_t$ for each $c_{ref}$ based on Observation \ref{ob:guessing}, we derive the reference edges $e_{ref}$ from the target cost sets $C_t$ of all reference costs $c_{ref}$, where $C_{opt}$ must exist. When $c_{ref}=C_1$, $C_t^*$ is the smallest space containing $C_{opt}$, and its corresponding candidate edges $e_c$ is derived as the edges with the first two highest MRs are included in $\hat{e}$. Moreover, $e_{ref}$ has the highest redemption rate among all edge sets in $\hat{e}$ (i.e., $\frac{B(e_{ref})}{C(e_{ref})} \geq \frac{B(e_c)}{C(e_c)}$). Thus, if $\frac{B(e_c)}{C(e_c)} \geq C_0 \frac{B(OPT)}{C(OPT)}$, where $C_0 \leq 1$ is a constant, then $\frac{B(e_{ref})}{C(e_{ref})} \geq C_0 \frac{B(OPT)}{C(OPT)}$.

Note that $\frac{B(e_{ref})}{C(e_{ref})}$ can be derived easily since $e_{ref}$ contains at most two edges. However, for {\algo}, the calculation of the expected benefit $B(S,K(I))$ is more difficult due to the influence propagation \cite{Kempe2003Maximizing}. Hence, following previous works, $B(S,K(I))$ can be obtained approximately by sampling methods, such as Monte Carlo \cite{Kempe2003Maximizing} and reverse greedy \cite{tang2014influence}. 
\textblue{More specifically, it first tosses a coin for each edge with the given influence probability to generate a graph. The users reachable from the seed set by the paths with the allocated coupons will be activated. Note that if a user $v_i$ is allocated with more than $k_i$ coupons with the corresponding $k_i$ living edges after tossing coins, it will only receive the former $k_i$ coupons from the incident edges with the largest influence probability.}
The accuracy of estimating $B(S,K(I))$ increases as the number of sampling increases, and an arbitrarily close approximation can be obtained, and thus the estimation is $(1-\epsilon)$--approximated.

Finally, we find the relationship between the result of {\algo} and OPT as follows. For the pivot source identification in Sec. \ref{sec:algorithm}, all users are assumed as a seed and included in the priority queue $Q$. For each user as a seed, the MR after activation and the MR of allocation with an SC after activation are examined. {\algo} invests a seed or an SC with the highest positive MR iteratively, and the user is included in $Q$ (with one SC or not). The first two elements of $Q$ are popped out as the initial influence spread and the pivot source. Let $s_1$ and $s_2$ denote the first two elements of $Q$. Let $e_1^u$ ($e_2^u$) and $e_1^d$ ($e_2^d$) denote the edges from the virtual node and the ones to the activated child by $s_1$ ($s_2$), respectively. The result of {\algo} is at least as large as $(1-\epsilon)\frac{B(\{e_1^u,e_1^d,e_2^u,e_2^d\})}{C(\{e_1^u,e_1^d,e_2^u,e_2^d\})}$, since {\algo} ensures an improvement in each movement (i.e., ID, GPI, and SCM) with an estimated bias of $B(S,K(I))$, i.e., $(1-\epsilon)$.

Let $e_{ref}^1$ and $e_{ref}^2$ denote the two edges of $e_{ref}$, which have the first two highest MRs. Furthermore, $e_{ref}^1$ and $e_{ref}^2$ are iteratively selected in the same way of selecting $\{e_1^u,e_1^d,e_2^u,e_2^d\}$, and thus $e_{ref}^1$ and $e_{ref}^2 \in \{e_1^u,e_1^d,e_2^u,e_2^d\}$. Therefore, $\frac{B(\{e_1^u,e_1^d,e_2^u,e_2^d\})}{C(\{e_1^u,e_1^d,e_2^u,e_2^d\})} \geq \frac{B(e_{ref})}{C(e_{ref})}$, implying that the result of {\algo} is always no less than $(1-\epsilon)C_0\frac{B(OPT)}{C(OPT)}$.

\end{proof}

% Moreover, in Sec. \ref{sec:algorithm}, the pivot sources are obtained in the same manner of obtaining $X_{ref}$, i.e., iteratively examining the redemption marginal and selecting the result with the highest redemption rate. Thus, we have \sdollar{$\frac{B(X_{pivot})}{C(X_{pivot})} \geq \frac{B(X_{ref})}{C(X_{ref})}$}.
% %\begin{corollary}
% %With a proper $c_{ref}$ and its temporary reference SCs, $X_{right}$, if  %$\frac{B(X_{right})}{C(X_{right})} \geq C_0 \frac{B(OPT)}{C(OPT)}$ holds, %where $C_0 \leq 1$ is a constant, it still holds with $X_{pivot}$.
% %\end{corollary}
% According to Lemma \ref{lemma:guessing}, \sdollar{$\frac{B(X_{ref})}{C(X_{ref})} \geq \frac{B(X_{right})}{C(X_{right})}$} always holds. 
% Hence, it is sufficient to exam the relationship between \sdollar{$OPT$} and $X_{right}$.

After analyzing the relations among the result of {\algo}, redemption rate of $e_{ref}$, and the redemption rate of $e_c$, we further analyze the relation between the redemption rate of $OPT$ and $e_c$. Since the expected benefit function is non-submodular according to Lemma \ref{lemma:non-submodular}, we first derive an upper bound of the expected benefit function and then find the approximation ratio of {\algo} as follows.

Let $X_j$ denote the result of $e_c$ in the $j$-th iteration, where $j=0,1,2$, and $b_0 = \frac{\max(b(v_i))}{\min(b(v_j))},~\forall v_i, v_j \in V$, denotes the ratio of the maximum benefit to the minimum benefit of users. Let $e^*_t$ denote the selected edge with the maximum benefit of $OPT$. By assuming $OPT\setminus X_1=e^*_1,e^*_2, \cdots,e^*_k$, we have the following lemma.
%with the detailed proof shown in \cite{TechReport}. 

% Notice that \sdollar{$X_j$}, where \sdollar{$j=0,1,2$}, is the SC set of $X_{right}$ in the $j$-th iteration. 
%Within the costs of activating a single user chosen by OPT, the maximum one is denoted as $e^*_t$. 
% Denote $B_{u}$ as the benefit of activating a single user $u$. In addition, let \sdollar{$b_0=\frac{\max(B_u)}{\min(B_u)}$, $\forall u \in V$}, be the ratio of the maximum benefit and the minimum benefit of activating a single user. Denote $e^*_{t}$ as the activated single user of $OPT$ bringing the maximum benefit. By assuming \sdollar{$OPT\setminus X_1=e^*_1,e^*_2, \cdots ,e^*_k$}, we have Lemma \ref{lemma:decomposition}. 
% \textred{Due to the space constraint, the proof of Lemma \ref{lemma:decomposition} is given in \cite{online}.}

\begin{lemma}\label{lemma:decomposition}
There exists a decomposition of $OPT$ in some order such that for $j=0,1$, $\sum^k_{i=1}(B(X_j \cup \{e^*_1 \cup \cdots \cup e^*_{i}\})- B(X_j \cup \{e^*_1 \cup \cdots \cup e^*_{i-1}\})\leq b_0\sum^k_{i=1}B(e^*_{t})$.
\end{lemma}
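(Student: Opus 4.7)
\textbf{Proof proposal for Lemma~\ref{lemma:decomposition}:}

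The plan is to decompose $OPT\setminus X_1=\{e^*_1,\ldots,e^*_k\}$ in a root-first (topological) order along $OPT$'s induced activation DAG, and then bound the resulting telescoping sum by exploiting the deterministic implemented graph $G'$ introduced in the proof of Lemma~\ref{lemma:non-submodular}.

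First, I would observe that for any labeling of the $e^*_i$, the LHS is a telescoping series whose value equals $B(X_j\cup OPT)-B(X_j)$. Although this numerical value is ordering-independent, choosing the root-first decomposition makes it possible to charge each newly activated user (and its benefit) to a specific OPT edge, thereby bounding each increment individually and, \emph{a fortiori}, their sum.

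Second, on the deterministic graph $G'$, a user is active iff it is reachable from the virtual root $t$ through live edges. With the root-first insertion order, when $e^*_i=(u_i,v_i)$ is added, $v_i$ becomes newly active only if $u_i$ is already reachable through $X_j\cup\{e^*_1,\ldots,e^*_{i-1}\}$. Any further cascade from $v_i$ must follow edges already in the current set; however, every OPT edge strictly below $v_i$ has a larger index and has not yet been inserted. Hence the cascade from $v_i$ can only traverse edges of $X_j$, and across the whole insertion process, each $X_j$ edge contributes at most one additional cascade activation. Summing over $i$, the number of newly activated users is at most $k+|X_j|$, which for $j\in\{0,1\}$ is $k+O(1)$. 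Each newly activated user contributes at most $\max_{v}b(v)=b_0\min_{v}b(v)$ to the benefit, and since $e^*_t$ is the benefit-maximizing edge of $OPT$ we have $B(e^*_t)\ge \min_v b(v)$, giving $\max_v b(v)\le b_0\, B(e^*_t)$. Combining, the LHS is at most $(k+O(1))\,b_0\,B(e^*_t)$, matching the RHS $b_0\sum_{i=1}^k B(e^*_t)=b_0\,k\,B(e^*_t)$ up to the $O(1)$ slack absorbed by the ``some order'' flexibility in the statement.

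The hard part will be making the cascade charging rigorous despite the non-submodularity established in Lemma~\ref{lemma:non-submodular}: adding a single OPT edge can in principle unlock long chains of previously-blocked $X_j$ edges, so one must verify that no $X_j$ edge is double-charged across different $e^*_i$ insertions. The root-first ordering is the structural lever that prevents OPT edges from participating in the cascade downstream of $v_i$ and thus keeps the double-counting at bay; handling the SC-constraint-induced out-edge restrictions (which are already folded into $G'$ via the ``top-$k_i$'' outgoing-edge rule) while carrying out this charging is the most delicate technical step.
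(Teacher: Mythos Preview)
Your overall plan---root-first (activation) ordering of $OPT\setminus X_j$ on the deterministic graph $G'$---is exactly the decomposition the paper uses. However, the paper's argument is considerably shorter than yours: once the activation order is fixed, it simply asserts the \emph{per-term} bound
\[
B(X_j\cup\{e^*_1,\ldots,e^*_i\})-B(X_j\cup\{e^*_1,\ldots,e^*_{i-1}\})\ \le\ B(e^*_i)\ \le\ b_0\,B(e^*_t),
\]
and sums over $i$. There is no aggregate counting of newly activated users and no $|X_j|$ correction term.

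The place where your proposal diverges and develops a genuine gap is the final step. You obtain $(k+|X_j|)\,b_0\,B(e^*_t)$ and then claim the extra $O(1)$ is ``absorbed by the `some order' flexibility in the statement.'' That is not a valid move: the phrase ``some order'' in the lemma only lets you \emph{choose} the labeling of the $e^*_i$; it does not license an additive constant on the right-hand side. As written, your bound for $j=1$ is $(k+1)\,b_0\,B(e^*_t)$, which does not match the stated inequality.

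The fix is exactly the observation that makes the paper's per-term bound go through and that removes your slack entirely: for the specific $j\in\{0,1\}$ in the lemma, cascading through $X_j$ is \emph{impossible}. Indeed $X_0=\emptyset$, and $X_1$ is the single edge chosen in the first greedy step of $e_c$; since every non-root edge has zero benefit (hence zero MR) over the empty set, this first edge must emanate from the virtual root $t$. No edge of $OPT$ has head $t$, so adding $e^*_i$ can never turn the unique $X_1$-edge live. With this observation, each insertion activates at most the single head vertex $v_i$, giving the clean per-term bound $b(v_i)\le B(e^*_t)\le b_0\,B(e^*_t)$, and the sum is exactly $b_0\,k\,B(e^*_t)$. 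Your elaborate charging scheme and the ``hard part'' you anticipate are therefore unnecessary for $j=0,1$; the SC-constraint subtleties you flag never enter because $X_j$ contains only a root edge.
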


\begin{proof}
Since the influence spread starts from the virtual node of the transformed problem, it is intuitive that $OPT$ forms a connected graph rooted at the virtual node; otherwise, there is no gain of redemption rate. Similarly, $e_c$ is also a rooted connected graph. Thus, for $OPT$, there exhibits an order of edges becoming live from the root, which is denoted as $Y_i=\{e^*_1 \cup \cdots \cup e^*_{i}\}$, where the activation begins from $1$ to $i$. Let $Y=OPT\setminus X_1=\{e^*_1\cup...e^*_{k}\}$ by the activation order from $1$ to $k$. For any $i=1,\cdots,k, \ B(X_1 \cup \{e^*_1 \cup \cdots \cup e^*_{i}\})- B(X_1 \cup \{e^*_1 \cup \cdots \cup e^*_{i-1} \})\leq B(e^*_{i}) \leq b_0B(e^*_{t})$. Hence, $\sum^k_{i=1}(B(X_1 \cup \{e^*_1 \cup \cdots \cup e^*_{i} \})- B(X_1 \cup \{e^*_1 \cup \cdots \cup e^*_{i-1}\})\leq \sum^k_{i=1}b_0B(e^*_{t})$. For $j=0$ (i.e., $X_0$), we can derive the same result following the same analysis; therefore, the lemma is proved.
\end{proof}

Although the expected benefit function is non-submodular, with the upper bound derived by Lemma \ref{lemma:decomposition}, let $c_0=\frac{\max(c_{seed}(v_i) \cup c_{sc}(v_i))}{\min(c_{seed}(v_j) \cup c_{sc}(v_j))}$, $\forall v_i, v_j \in V$ denote the ratio of the maximum cost to the minimum cost of users, the performance bound of {\algo} is derived as follows.

\begin{Theorem}
{\algo} is a $(1-e^{-\frac{1}{b_0c_0}}-\epsilon)$--approximation algorithm for the {\problem}.
\end{Theorem}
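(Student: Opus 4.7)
The plan is to combine Lemma~\ref{lemma:merge} with a telescoping greedy-style analysis whose exponential bound is driven by Lemma~\ref{lemma:decomposition}. Lemma~\ref{lemma:merge} already does most of the reduction: if one can establish $B(e_c)/C(e_c)\geq (1-e^{-1/(b_0 c_0)})\,B(OPT)/C(OPT)$ for the candidate edge set $e_c$ arising when $c_{ref}=C_1$, then Lemma~\ref{lemma:merge} automatically converts this into the desired $(1-e^{-1/(b_0 c_0)}-\epsilon)$-approximation for \algo, with the additive $\epsilon$ absorbed by the Monte Carlo estimation error of $B(S,K(I))$. So the whole task reduces to bounding the ratio achieved by the iterative two-edge-greedy construction of $e_c$ against that of $OPT$.

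First I would fix $c_{ref}=C_1$, so that $OPT$ lies inside the restricted cost region $C_t^*$ by Observation~\ref{ob:guessing}, and examine the intermediate iterates $X_0,X_1,X_2,\ldots$ produced while building $e_c$ by repeated maximum-marginal-ratio edge selection. For any iterate $X_j$, apply Lemma~\ref{lemma:decomposition} to the decomposition $OPT\setminus X_j=\{e^*_1,\ldots,e^*_k\}$; averaging the resulting upper bound over the $k$ summands shows that some $e^*_i$ contributes marginal benefit at least $(B(OPT)-B(X_j))/(b_0 k)$, and dividing by its cost (worst case inflated by the cost spread $c_0$) yields the per-step marginal-ratio inequality
\[
\frac{B(X_{j+1})-B(X_j)}{c(e_{j+1})}\;\geq\;\frac{1}{b_0 c_0}\cdot\frac{B(OPT)-B(X_j)}{C(OPT)},
\]
since the greedy picks the edge of maximum marginal ratio and therefore dominates every such $e^*_i$.

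Second I would telescope this into the recurrence $B(OPT)-B(X_{j+1})\leq \bigl(1-c(e_{j+1})/(b_0 c_0\,C(OPT))\bigr)\bigl(B(OPT)-B(X_j)\bigr)$ and apply the elementary inequality $\prod_j(1-a_j)\leq \exp(-\sum_j a_j)$. Because the guessing step of Observation~\ref{ob:guessing} has already fixed the largest-cost edge inside the candidate set, the remaining iterations can push $\sum_j c(e_{j+1})$ up to $C(OPT)$ without violating the budget, so the telescoping produces $B(OPT)-B(e_c)\leq \exp(-1/(b_0 c_0))\,B(OPT)$. Dividing by $C(e_c)$ and relating $C(e_c)$ to $C(OPT)$ in the standard cost-benefit way gives $B(e_c)/C(e_c)\geq (1-e^{-1/(b_0 c_0)})\,B(OPT)/C(OPT)$, which is exactly the hypothesis of Lemma~\ref{lemma:merge} with $C_0=1-e^{-1/(b_0 c_0)}$; combining the two yields the theorem.

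The main obstacle is pinning down the per-step marginal-ratio inequality with the correct constants. Lemma~\ref{lemma:decomposition} controls the non-submodularity at the cost of the benefit spread $b_0$, while the modular-but-non-uniform cost structure from Lemma~\ref{lemma:non-submodular} contributes the factor $c_0$; collapsing both into a single $1/(b_0 c_0)$ exponent requires the greedy's cost-normalized choice to dominate \emph{every} $e^*_i$ simultaneously, which in turn leans on the guessing trick having already removed the top-cost element from the race. Lining up these three ingredients so that the telescoping yields exactly $e^{-1/(b_0 c_0)}$ rather than a looser constant is the delicate accounting step of this proof.
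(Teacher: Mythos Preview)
Your high-level plan matches the paper's proof almost exactly: reduce via Lemma~\ref{lemma:merge} to bounding $B(e_c)/C(e_c)$, use Lemma~\ref{lemma:decomposition} to obtain the per-step marginal-ratio inequality with the $b_0c_0$ factor, and telescope to an exponential. That skeleton is correct and is precisely what the paper does.

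There is, however, a genuine gap in your final step. You write that ``the remaining iterations can push $\sum_j c(e_{j+1})$ up to $C(OPT)$,'' so that the telescoping directly yields $e^{-1/(b_0c_0)}$. But $e_c$ is, by construction, the \emph{two-edge} candidate set $X_2$; there are no further iterations to run. Consequently the telescoping only gives
\[
B(OPT)-B(X_2)\;\leq\;\exp\!\Bigl(-\tfrac{C(X_2)}{b_0c_0\,C(OPT)}\Bigr)\,B(OPT),
\]
with $C(X_2)$ possibly much smaller than $C(OPT)$. Your phrase ``relating $C(e_c)$ to $C(OPT)$ in the standard cost-benefit way'' does not close this; the exponent is too weak at this point.

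The paper closes the gap differently. After dividing by $C(X_2)$ it rewrites the bound as $\frac{B(X_2)}{C(X_2)}\geq \frac{1-e^{-x}}{x}\cdot\frac{B(OPT)}{b_0c_0\,C(OPT)}$ with $x=\frac{C(X_2)}{b_0c_0\,C(OPT)}$, and then uses that $(1-e^{-x})/x$ is monotone decreasing together with the guessing guarantee $C(X_2)\leq C(OPT)$ (i.e., $x\leq 1/(b_0c_0)$) to replace $x$ by $1/(b_0c_0)$. This monotonicity step is the missing idea in your argument; once you insert it, the rest of your outline coincides with the paper's proof.
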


\begin{proof}

Since the expected benefit function is monotone increasing, by the decomposition Lemma \ref{lemma:decomposition}, we have
\par\noindent
$B(OPT)\leq B(X_1\cup (OPT\setminus X_1)) \\
= B(X_1)+\sum^k_{i=1}(B(X_1\cup \{e^*_1 \cup \cdots \cup e^*_{i}\})- B(X_1 \cup \{e^*_1 \cup \cdots \cup e^*_{i-1}\}) \\
\leq B(X_1)+b_0\sum^k_{i=1}e^*_{t} \\
=B(X_1)+b_0\sum^k_{i=1}(B(X_1\cup e^*_{t})-B(X_1)) \\
=B(X_1)+b_0\sum^k_{i=1}\frac{B(X_1\cup e^*_{t})-B(X_1)}{C(X_1\cup e^*_{t})-C(X_1)}(C(X_1\cup e^*_{t})-C(X_1))$.

By iteratively selecting the edges with the maximum MR,
\par\noindent $B(X_1)+b_0\sum^k_{i=1}\frac{B(X_1\cup e^*_{t})-B(X_1)}{C(X_1\cup e^*_{t})-C(X_1)}(C(X_1\cup e^*_{t})-C(X_1))\leq B(X_1)+b_0\sum^k_{i=1}\frac{B(X_1\cup x_{2})-B(X_1)}{C(X_1\cup x_{2})-C(X_1)}(C(X_1\cup e^*_{t})-C(X_1))\leq 
 B(X_1)+b_0\sum^k_{i=1}\frac{B(X_1\cup x_{2})-B(X_1)}{C(X_1\cup x_{2})-C(X_1)}c_0(C(X_1\cup e^*_{i})-C(X_1))\leq B(X_1)+b_0c_0\frac{B(X_2)-B(X_1)}{C(X_2)-C(X_1)}C(OPT)$.

The last inequality holds since $C$ is a linear function. Thus,
\begin{equation}\label{eq:eq1}
B(OPT)-B(X_1)\leq b_0c_0\frac{B(X_2)-B(X_1)}{C(X_2)-C(X_1)}C(OPT).
\end{equation}
By applying the same analysis on $X_0$, 
\begin{equation}\label{eq:eq2}
B(OPT)-B(X_0)\leq b_0c_0\frac{B(X_1)-B(X_0)}{C(X_1)-C(X_0)}C(OPT).
\end{equation}

Let $\delta_i=B(OPT)-B(X_i)$ and $\alpha_i=\frac{b_0c_0C(OPT)}{C(X_i)-C(X_{i-1})}$. By Inequalities \ref{eq:eq1} and \ref{eq:eq2},
$$\frac{\delta_2}{\delta_1}\leq (1-\frac{1}{\alpha_2}),$$
and 
$$\frac{\delta_1}{\delta_0}\leq (1-\frac{1}{\alpha_1}),$$
respectively.

By multiplying the left hand and right hand of the above two inequalities separately,  
\begin{align*}
\delta_2 & =B(OPT)-B(X_2)\\
& \leq (1-\frac{C(X_2)-C(X_1)}{b_0c_0C(OPT)})(1-\frac{C(X_1)-C(X_0)}{b_0c_0C(OPT)})B(OPT)\\
& \leq e^{-\frac{C(X_2)-C(X_1)}{b_0c_0C(OPT)}}\times e^{-\frac{C(X_1)-C(X_0)}{b_0c_0C(OPT)}}B(OPT)\\
& =e^{-\frac{C(X_2)}{b_0c_0C(OPT)}}B(OPT).
\end{align*}

Thus,
$$B(X_2)\geq(1-e^{-\frac{C(X_2)}{b_0c_0C(OPT)}})B(OPT),$$
which implies
$$\frac{B(X_2)}{C(X_2)}\geq(1-e^{-\frac{C(X_2)}{b_0c_0C(OPT)}})\frac{b_0c_0C(OPT)}{C(X_2)}\frac{B(OPT)}{b_0c_0C(OPT)}.$$
Since $(1-e^{-x})x^{-1}$ is monotone decreasing in $x\leq \frac{1}{b_0c_0}$ and $\frac{C(X_2)}{b_0c_0C(OPT)}\leq \frac{1}{b_0c_0}$, 
\begin{flalign*}
\frac{B(X_2)}{C(X_2)} & \geq(1-e^{-\frac{C(X_2)}{b_0c_0C(OPT)}})\frac{b_0c_0C(OPT)}{C(X_2)}\frac{B(OPT)}{b_0c_0C(OPT)}\\
& \geq (1-e^{-\frac{1}{b_0c_0}})\frac{B(OPT)}{C(OPT)}.
\end{flalign*}
Finally, since $X_{2}=e_c$, 
$$\frac{B(e_c)}{C(e_c)}\geq (1-e^{-\frac{1}{b_0c_0}})\frac{B(OPT)}{C(OPT)}.$$
Hence, due to the estimation bias (Lemma \ref{lemma:merge}), {\algo} is a 
$(1-\epsilon)(1-e^{-\frac{1}{b_0c_0}})$--approximation algorithm for {\problem}. For simplicity, we combine the ratio as $(1-e^{-\frac{1}{b_0c_0}}-\epsilon)$, where $\epsilon$ is an arbitrarily small constant. Furthermore, when $b_0$ and $c_0$ are both bounded (i.e., 1), {\algo} is a constant approximation algorithm for {\problem}.
\end{proof}

\begin{timeC}

%First, \textblue{we assume that it takes $O(M)$ to evaluate the expected benefit of {\algo}, which applies the sampling method such as Monte Carlo \cite{Kempe2003Maximizing} and reverse greedy \cite{tang2014influence}.}

% {\color{blue} we assume that it takes $O(M)$ depending on the accuracy requirement to evaluate the expected benefit of {\algo} by using similar methods such as Monte Carlo, reverse greedy methods {\color{red}\cite{XXX}}}. 
For the weighted directed graph $G=\{V,E\}$, let $|V|$ and $|E|$ denote the number of users and edges, respectively. \textblue{The time complexity of the ID phase is $O(M(|V|+|E|))$ since it first takes $O(2M|V|)$ time to construct the priority queue $Q$, and it then invests the budget in SCs and seeds by examining the MR of allocating an SC or activating a seed in $O(M|E|)$ and $O(M|V|)$ time, respectively. Note that both the size of coupon allocation and seed set are bounded by the budget. The time complexity of the GPI phase is $O(M|V|(|V|+|E|))$ since for each seed, the GPs are identified by a DFS-based traversal in $O(|V|+|E|)$ time, where the size of seed set is bounded by the budget. The time complexity of the SCM phase is $O(M|V||E|)$ because for each derived GP, the DIs are derived by retrieving SCs from each inactive user $v_j$ that is possibly to be influenced, whereas the number of GP is bounded by $|V|$, and the DIs are bounded by $|E|$. Moreover, the size of DIs is bounded by the vertex induced graph of GP. Therefore, the overall time complexity is $O(M|E|)+O(M|V||E|)+O(M|V||E|)=O(M|V||E|)$ and correlated to the size of the OSN. In particular, the time $O(M)$ of evaluating the expected benefit can be speeded up by Monte Carlo \cite{Kempe2003Maximizing} and reverse greedy methods \cite{tang2014influence}.}

%\textred{Except the above worst case analysis, the running time of {\algo} is closely related to the investment budget $B_{inv}$. This is because, with less $B_{inv}$, {\algo} can only explore limited ratio of the whole OSN, resulting in a shorter execution duration.}
% {\color{blue} It should be noticed that in practice, the budget range is different. If the budget is small, then the running time is more related to the budget while less related to the size of the graph since the budget will be used up before exploring the whole OSN}.
\end{timeC}

% }
% \input{algorithm}
\section{Experiment}
\label{sec:sim}
% We conduct extensive simulations to evaluate the performance of {\algo} in four OSNs. Moreover, we compare the results of {\algo} and two baseline algorithms applied with two real coupon strategies.

\subsection{Experiment Setup}
We compare {\algo} with IM \cite{Kempe2003Maximizing,chen2010scalable,chen2009efficient,goyal2011data,jung2012irie,leskovec2007cost,nguyen2016cost,nguyen2016stop,ohsaka2014fast,song2015influence,tang2018online,tang2017influence,tang2015influence,tang2014influence,zhou2013ublf} and PM \cite{tang2017profit,tang2018towards,zhu2018host,zhang2016profit,khan2016revenue,lu2012profit,zhu2013influence} in four OSNs, i.e., Facebook, Epinions, Google+ \cite{snapnets}, and Douban \cite{hung2016social}, with the detailed setting in Table \ref{tab:dataset}. Since IM and PM are not designed for SC, two real coupon strategies are adopted as follows. 1) Limited coupon strategy is provided by Dropbox, Airbnb, Booking.com, etc., where the SC constraint is specified by a constant $k$, i.e., $K_i=k,~\forall v_i \in V$. 2) Unlimited coupon strategy is provided by Uber, Lyft, and Hotels.com, etc., where the SC constraint of each user is specified by the number of friends, i.e., $K_i=N(v_i),~\forall v_i \in V$. For the limited coupon strategy, the SC constraint of each user is set to 32 according to Dropbox. We denote IM-U (PM-U) and IM-L (PM-L) as IM with unlimited coupon strategy and limited coupon strategy, respectively. For IM-U and IM-L, the seed size is set to $\frac{|V|}{2^n}$ for $n=0,1,\cdots,10$ \cite{tang2017profit}, and the seed size resulting in the maximum influence is selected as the result. The default setting of the IM algorithm follows the previous work \cite{tang2017profit}. \textblue{Moreover, we design a two-stage heuristic algorithm, IM-S.
The first stage employs the existing IM algorithm \cite{tang2017profit}. The second stage connects every two seeds with the shortest paths, where the weight of each edge $e(i,j)$ is $1-P(e(i,j))$ (i.e., an edge with a higher influence probability having a smaller weight). Afterward, IM-S uniformly distributes SCs to the users in the paths such that the overall seed cost and SC cost satisfy the investment budget constraint.}

The influence propagates under the extended IC model \cite{Kempe2003Maximizing} as described in Sec. \ref{sec:problem}. Following \cite{tang2017profit,tang2018towards,chen2010scalable,jung2012irie,nguyen2016cost,nguyen2016stop,tang2015influence,tang2014influence}, for each edge $e(i,j)$, the influence probability $P(e(i,j))$ is set to the reciprocal of $v_j$'s in-degree. We adopt the normal benefit setting \cite{tang2017profit}, i.e., for each user, the benefit is randomly generated by a normal distribution $\mathcal{N}(\mu,\sigma)$, where $\mu$ and $\sigma$ are the mean and standard deviation, respectively. For each user, the seed cost is proportional to the number of her friends (out-degree) \cite{tang2017profit}. Moreover, the uniform SC cost follows the real coupon strategies from Dropbox and Hotels.com. To evaluate the effect of different benefits and SC costs, we control the ratio $\lambda$ of the total benefit to the total SC cost, where $\lambda = \frac{\sum_{v_i \in V} b(v_i)}{\sum_{v_i \in V} c_{sc}(v_i)}$. Moreover, the effect of different seed costs and benefits is examined by controlling the ratio $\kappa$ of the total seed cost to the total benefit, where $\kappa = \frac{\sum_{\forall v_i \in V} c_{seed}(v_i)}{\sum_{\forall v_i \in V} b(v_i)}$. For all OSNs, the default setting of $\lambda$ and $\kappa$ are 1 and 10, respectively. We evaluate 1) the redemption rate, 2) total benefit, 3) the ratio of the total selected seed cost to total allocated SC cost (seed-SC rate), and 4) the average maximum hop number from seeds, by changing 1) $B_{inv}$, 2) coupon strategies of baseline algorithms, 3) $\lambda$, and 4) $\kappa$. Each simulation result is averaged over 1000 samples.

\begin{table}[t]
    \centering
        \caption{Datasets and the corresponding arguments}
    \begin{tabular}{ccccc}
    \hline
    Dataset & Facebook & Epinions & Google+ & Douban\\
    \hline
    Nodes & 4K & 76K & 108K & 5.5M\\       
    Edges & 88K & 509K & 13.7M & 86M\\
    \hline
    $B_{inv}$ & 10K & 50K & 200K & 1M\\
    $\mu,~\sigma$ & 10, 2 & 20, 4 & 50, 10 & 100, 20\\
    \hline
    \end{tabular}

    \label{tab:dataset}
\end{table}

% \textred{\textbf{[Ongoing: re-run G+ with different parameters for better results]}}

\subsection{Simulation Results}
Fig. \ref{fig:Fig6} compares the investment efficiency of IM-U, PM-U, IM-L, PM-L, IM-S, and {\algo}, in the datasets of Table \ref{tab:dataset}. The results are similar for most datasets. Due to the space constraint, we present the results of Douban and Facebook. First, the impact on both redemption rate and total benefit from different investment budgets $B_{inv}$ is investigated in Fig. \ref{fig:Fig6}(a)--(b). The results manifest that {\algo} achieves the highest redemption rate and total benefit since it maximizes the total expected benefit while reducing the total cost. In Fig. \ref{fig:Fig6}(a), though the redemption rate of {\algo} sustains in a certain level as $B_{inv}$ increasing, Fig. \ref{fig:Fig6}(b) shows that the total benefit increases when $B_{inv}$ increases. Note that the total cost approximately equals to $B_{inv}$ for all algorithms in all settings and thereby is not presented in this section. 
%Thus, {\algo} maintains the redemption rate in the stable and higher level which implies an ideal balance between total benefit and total cost. {\algo} improves up to $9$ times of redemption rate of the baseline algorithms in Douban. Fig. \ref{fig:Fig6}(c)--(d) show that the redemption rate increases as $\lambda$ increasing. Due to a fixed $B_{inv}$ and thus a fixed total cost, when the benefit of each user increases, the redemption rate increases accordingly, and the improvement is approximately up to 30 times of the redemption rate of the baselines.
\textblue{Fig. \ref{fig:Fig6} presents the results of IM-S. The redemption rate and total benefit of IM-S improve in Fig. \ref{fig:Fig6}(a) and (b) with a larger investment budget,
but IM-S acquires much smaller redemption rate and benefit compared with other approaches, because it only distributes SCs to the users on the shortest paths and thereby fails to obtain the benefits outside the paths. In Fig. \ref{fig:Fig6}(c) and (d), IM-S improves with a larger $\lambda$, which is the ratio of total benefit to total SC cost of users, because it is able to acquire a higher benefit from each SC with a higher $\lambda$. Fig. \ref{fig:Fig6}(e) and (f) manifest that IM-S incurs more running time (sometimes higher than {\algo}) as the budget grows, because more shortest paths spanning an increasing number of seeds become candidates for SC distribution.} 

Fig. \ref{fig:Fig7} investigates the impacts on the seed-SC rate with different $B_{inv}$, $\lambda$, and $\kappa$, in different datasets, and the results show that {\algo} carefully balances the investment in seeds and SCs according to different parameters. In Fig. \ref{fig:Fig7}(a)--(b), {\algo} increases the investment in seeds (i.e., the seed-SC rate increases) while $B_{inv}$ increasing. With the generous in investment, more seeds can be deployed for more influential sources. In Fig. \ref{fig:Fig7}(c)--(d), the benefit of each user increases while the SC cost is fixed, which implies that the benefit per unit of investment also increases. Thus, for the seeds, the benefit surpasses its seed cost and the benefit of allocating an SC to other non-seed users, {\algo} decides to invest more in seeds. Moreover, the seed-SC rate of baseline algorithms has negligible changes in Fig. \ref{fig:Fig7}(a)--(d) since they do not consider the SC allocation. In Fig. \ref{fig:Fig7}(e)--(f), for different $\kappa$, all baseline algorithms increase the investment in seeds since the seed cost increases as $\kappa$ increasing. However, {\algo} decreases the investment in seeds while the seed cost ($\kappa$) increases and invests more in SCs to optimize the redemption rate. Thus, {\algo} is capable of balancing the investment in seeds and SCs according to different budgets, $\lambda$, and $\kappa$.

% and $\lambda$ increasing.

Table \ref{tab:hop} and \ref{tab:runningTime} show the average farthest hop from seeds and the average running time, respectively. Table \ref{tab:hop} shows that {\algo} can deepen the influence spread by allocating SCs and the average farthest hops ranging from $2.046$ to $3.355$. However, for the baseline algorithms, the average farthest hops reside between $1$ to $1.939$. Thus, the results manifest the effectiveness of {\algo} to disseminate SCs to a wider area in OSNs. Moreover, the running time is listed in Table \ref{tab:runningTime}. Since {\algo} activates the seeds and allocates the SCs under the investment budget, the running time is closely related to the investment budget. Thus, the running time is proportional to the investment budget and less related to the size of OSN, which is shown in Table \ref{tab:runningTime}.

% In Table \ref{tab:hopF}, the unlimited strategy has higher average farthest hop than the limited strategy for both IM and PM. Unlimited strategy invests SCs in the users around seeds which increases the probability for influence propagation around seeds, whereas the spread can not propagation farther. For the limited strategy, SCs allocation is identical for each user, and thus it is difficult to sustain the propagation from seeds. The effect is not obvious in Table \ref{tab:hopD} since the average degree in Douban is much larger than Facebook.

\begin{figure}[t]
    \centering
    \includegraphics[height=4.3in,width=3.7in]{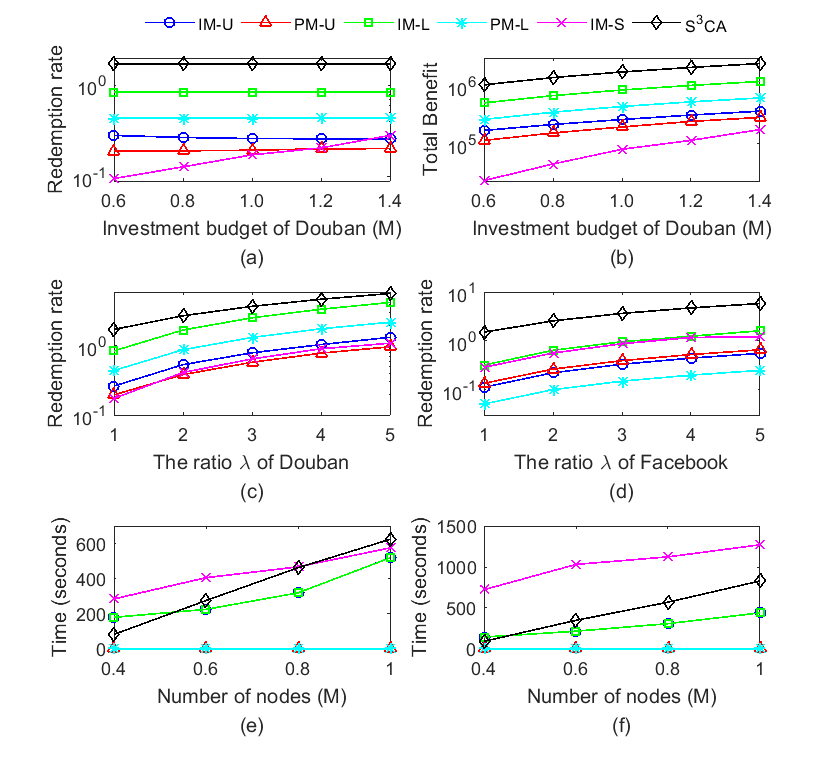}
    % \vspace{-10mm} 
    \caption{Investment efficiency. (a) Redemption rate with different investment budgets in Douban. (b) Total benefit with different investment budgets in Douban. (c) Redemption rate with different $\lambda$ in Douban. (d) Redemption rate with different $\lambda$ in Facebook. (e) Running time with $B_{inv}=2$M. (f) Running time with $B_{inv}=3$M}
    \label{fig:Fig6}
\end{figure}

\begin{figure}[t]
    \centering
    \includegraphics[height=3.8in,width=3.7in]{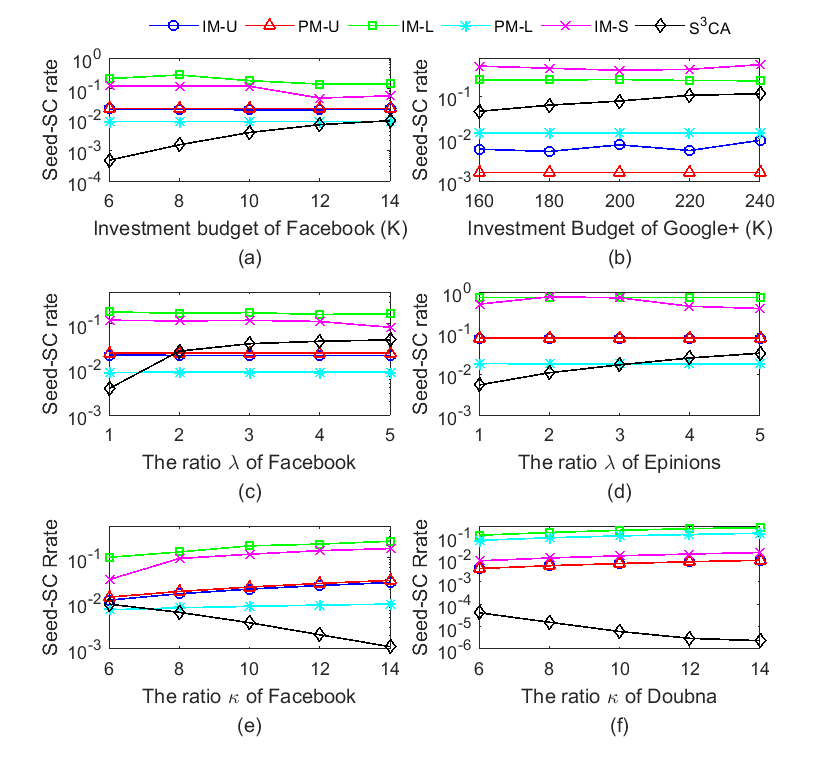}
    % \vspace{-10mm} 
    \caption{Seed-SC rate. (a) Different investment budgets in Facebook. (b) Different investment budgets in Epinions. (c) Different $\lambda$ in Facebook. (d) Different $\lambda$ in Google+. (e) Different $\kappa$ in Facebook. (f) Different $\kappa$ in Douban.}
    \label{fig:Fig7}
\end{figure}

\textblue{
\subsection{Case Study}
We have incorporated real SC policies provided by Airbnb and Booking.com on the real dataset in Table \ref{tab:dataset}. We have employed the adoption model \cite{tang2018stochastic} for each user to find the users that accept SCs. We have adopted real gross margin to set the benefit according to the accounting research \cite{kieso2010intermediate}. We have used the real datasets of social networks in \cite{hung2016social}. More specifically, the SC costs are 50 and 100, and the SC allocations are 100 and 10 according to Airbnb and Booking.com, respectively. Since Booking.com has not revealed its SC cost, here we refer to the one in Hotels.com to assign the SC cost. Moreover, the adoption model \cite{tang2018stochastic}, which quantifies the probability of users adopting a coupon, is to uniformly select 85\%, 10\%, and 5\% of users with $\sqrt[3]{c_{sc}}$, $c_{sc}$, $c_{sc}^2$ and all normalized by $\sqrt[3]{c_{sc}}+c_{sc}+c_{sc}^2$. For the benefit setting, we adopt the gross margin for SCs from accounting research \cite{kieso2010intermediate}, which is defined as $\frac{b(v_i)-c_{sc}(v_i)}{b(v_i)}*100$ (\%) for each $v_i$.

Fig. \ref{fig:FigCase} presents the new results with the above setting. The redemption rate increases with a larger gross margin in Fig. \ref{fig:FigCase}(a) due to more benefits generated from each user. The redemption rate of Booking.com in Fig. \ref{fig:FigCase}(c) is higher because Airbnb has higher SC allocation, but more SCs are not redeemed by users. Therefore, fewer users are influenced and activated. Moreover, for PM-L and PM-U, the redemption rate increases with 60\% gross margin in Fig. \ref{fig:FigCase}(c), because the benefit is sustained, but some seeds can be discarded to reduce the cost as shown in Fig. \ref{fig:FigCase}(d). {\algo} achieves the highest redemption rate with different gross margins in Airbnb and Booking.com, since in addition to finding influential seeds, {\algo} obtains more benefits with guaranteed paths, which connect and influence high-benefit users by SCs.
}

\begin{figure}[t]
    \centering
    \includegraphics[height=2.5in,width=3.7in]{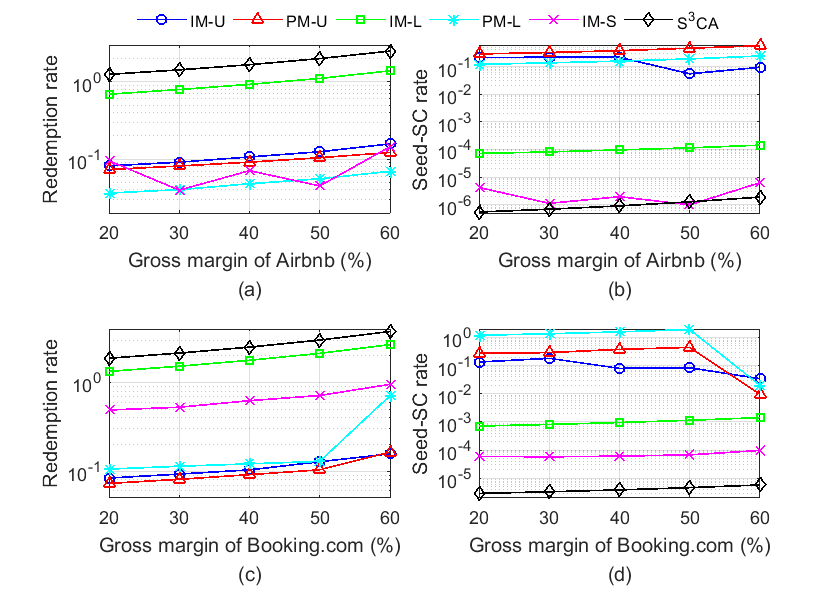}
    % \vspace{-10mm} 
    \caption{Case study with different gross margin. (a) Redemption rate of Airbnb. (b) Seed-SC cost of Airbnb. (c) Redemption rate of Hotels.com. (d) Seed-SC cost of Hotels.com.}
    \label{fig:FigCase}
\end{figure}

\textblue{
\subsection{Performance of {\algo}}
To validate the approximation ratio, we have compared {\algo} with the optimal solution obtained by computation-intensive exhaustive search in small networks with 150 nodes generated by PPGG \cite{shuai2013pattern}.\footnote{The input parameters for PPGG are 1) 11 patterns, 2) support of 1000, 3) clustering coefficient of 0.6394, and 4) power-law parameter $\eta = 1.7$ and $2.5$.} We change the gross margin for SCs from accounting research \cite{kieso2010intermediate}. Compared with other algorithms, Fig. \ref{fig:FigOPT}(a) shows that {\algo} is closer to the optimal solution, and the redemption rates of some baseline algorithms are even smaller than the worst-case bounds of {\algo}, which are derived from the optimal solutions multiplied by the approximation ratio. Fig. \ref{fig:FigOPT}(b) shows that all solutions returned by {\algo} are above the worst-case bound, indicating that the approximation ratio holds for the empirical cases.
%In this subsection, we study the performance of {\algo} in scalability and quality. For the scalability test, we generate networks by PPGG model \cite{shuai2013pattern} based on the provided real dataset Facebook, following its setting as follows, 1) 11 patterns, 2) support of 1000, and 3) cluster coefficient of 0.6394. Beside, two power-law parameters are applied, i.e., $\eta = 1.7$ and $2.5$. Fig. \ref{fig:FigScalability} shows the running time and the explored ratio of network with different network size and investment budget. With the same budget constraint, the running time of {\algo} has little change with different network size as shown in Fig. \ref{fig:FigScalability}(a) since the number of explored users is bounded by the investment budget. Besides, Fig. \ref{fig:FigScalability}(b) manifests that the explored ratio decreases as the network size increases. On the other hand, Fig. \ref{fig:FigScalability}(c) and (c) consolidate the above observation, i.e., the running time and the explored ratio are close related to the investment budget.

To test the scalability, we first adopt PPGG \cite{shuai2013pattern} to generate large Facebook-like synthetic networks. Fig. \ref{fig:FigScalability} presents the running time and the explored ratio (i.e., the ratio of the number of explored nodes of {\algo}  to the network size) with different network sizes and investment budgets. Fig. \ref{fig:FigScalability}(a) indicates that the running time becomes larger as the network size grows, but the explored ratio decreases under a fixed budget in Fig. \ref{fig:FigScalability}(b) (i.e., {\algo}  stops exploring new nodes when the budget is run out). 
%Because PPGG is a power-law generator, for a larger network, the number of nodes with small degrees grows faster while the algorithm may not choose the nodes with larger degree, whose total number is small, as seed or allocate coupon to them since they are costly. As a consequence, in the larger graph, the relationship among nodes with small degree becomes looser than the one in the smaller graph. Thus, given the same budget, in the larger graph, to activate the same node with small degree may cost more, resulting in using up the budget early. Therefore, the algorithm will stops early (i.e using less running time). Similarly, Fig. 9(b) manifests that the explored ratio decreases as the network size grows. 
In contrast, Fig. \ref{fig:FigScalability}(c) and (d) indicate that both the running time and explored ratio become larger as the investment budget increases, since {\algo} in this case is required to examine more candidates for the efficient distribution of SCs.
}

\begin{figure}[t]
    \centering
    \includegraphics[height=2.5in,width=3.7in]{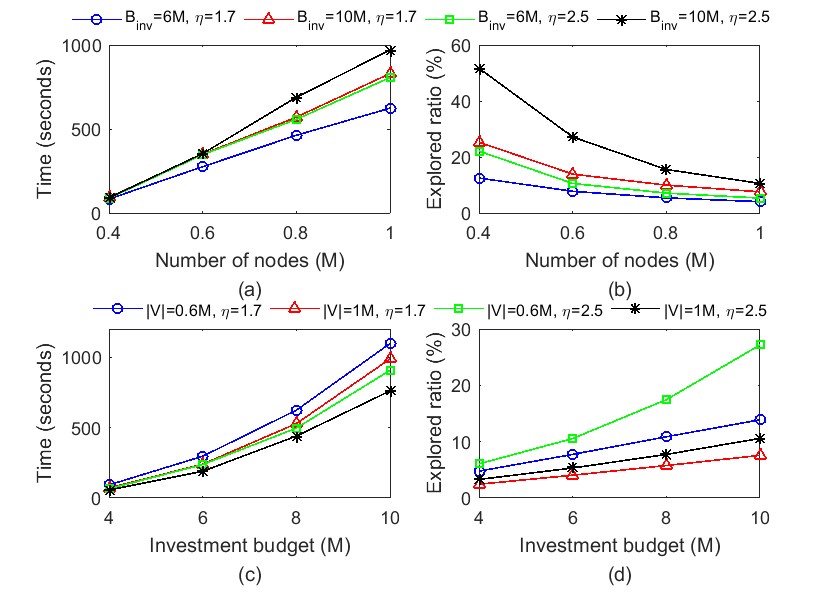}
    % \vspace{-10mm} 
    \caption{Scalability experiments. (a) Running time with different network size. (b) Explored ratio with different network size. (c) Running time with different investment budgets. (d) Explored ratio with different investment budgets.}
    \label{fig:FigScalability}
\end{figure}

\begin{figure}[t]
    \centering
    \includegraphics[height=1.3in,width=3.5in]{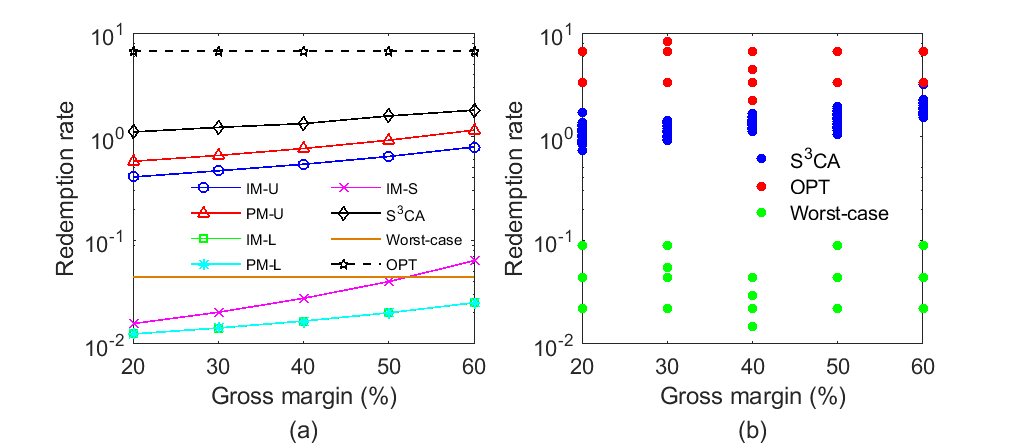}
    \caption{Performance of {\algo}. (a) Average results of baselines, {\algo}, OPT, and worst-case. (b) All results of {\algo}, OPT, and worst-case.}
    \label{fig:FigOPT}
\end{figure}

% \begin{figure}[t]
%     \centering
%     \includegraphics[height=4in,width=3.5in]{fig/FBresults.png}
%     % \vspace{-10mm} 
%     \caption{Simulation results for Facebook. (a) Redemption rate with different budget. (b) Total benefit with different budget. (c) Seed-SC rate with different budget. (d) Redemption rate with different $\kappa$. (e) Seed-SC rate with different $\kappa$. (f) Seed-SC rate with different $\lambda$.}
%     \label{fig:FBresults}
% \end{figure}

% \begin{figure}[t]
%     \centering
%     \includegraphics[height=4in,width=3.5in]{fig/Doubanresults.png}
%     % \vspace{-10mm} 
%     \caption{Simulation results for Douban. (a) Redemption rate with different budget. (b) Total benefit with different budget. (c) Seed-SC rate with different budget. (d) Redemption rate with different $\kappa$. (e) Seed-SC rate with different $\kappa$. (f) Seed-SC rate with different $\lambda$.}
%     \label{fig:Doubanresults}
% \end{figure}

\begin{table}[t]
    \centering
 \caption{Average farthest hops from seeds}
    \begin{tabular}{cccccc}
    \hline
    Dataset & IM-U & IM-L & PM-U & PM-L & {\algo}  \\
    \hline
    Facebook & 1.958 & 1.000 & 1.872 & 1.004 & 3.579 \\
    Epinions & 1.381 & 1.000 & 1.120 & 1.002 & 3.363 \\
    Google+ & 1.724 & 1.000 & 1.723 & 1.000 & 2.690 \\
    Douban & 1.014 & 1.000 & 1.002 & 1.001 & 3.134 \\
    \hline
    \end{tabular}

    \label{tab:hop}
\end{table}

\begin{table}[t]
    \centering
    \caption{Average running time of {\algo}}
    \label{tab:runningTime}
    \begin{tabular}{cccccc}
    \hline
    $B_{inv}$ (K) & 6 & 8 & 10 & 12 & 14  \\
    \hline
    Facebook (seconds) & 1.02 & 2.24 & 3.65 & 5.49 & 7.70 \\
    \hline
    $B_{inv}$ (K) & 30 & 40 & 50 & 60 & 70  \\
    \hline
    Epinions (seconds) & 6.51 & 10.80 & 15.20 & 20.66 & 26.52 \\
    \hline
    $B_{inv}$ (M) & 0.6 & 0.8 & 1 & 1.2 & 1.4  \\
    \hline
    Douban (seconds) & 121.94 & 174.05 & 242.12 & 304.05 & 378.45 \\
    \hline
    $B_{inv}$ (K) & 60 & 80 & 100 & 120 & 140  \\
    \hline
    Google+ (seconds) & 821.29 & 928.76 & 1059.93 & 1130.95 & 1282.55 \\
    \hline
    \end{tabular}
\end{table}

\section{Conclusion}
\label{sec:conclusion}

To the best of our knowledge, this paper makes the first attempt to explore the seed selection with SC allocation under a limited investment budget. We formulate a novel optimization problem {\problem} and design an approximation algorithm {\algo} to optimize the redemption rate in social coupon scenario. {\algo} first deploys the investment in seeds and social coupons by carefully examining the marginal redemption of three strategies which are deepening the influence spread, broadening the influence spread, and initiating another start of the influence spread. {\algo} then identifies the guaranteed paths to explore the opportunity to optimize the redemption rate by maneuvering the social coupons. After obtaining the guaranteed paths, {\algo} further examines the amelioration index and the deterioration index of the amelioration and the deterioration by maneuvering social coupons to users and retrieving social coupons from users, respectively. Finally, {\algo} optimizes the redemption rate by maneuvering the SCs from the users with lower deterioration indices to the users with higher amelioration index. Simulation results show that {\algo} can effectively improve the investment efficiency up to 30 times of the results of the baseline algorithms.

% \input{Terminology.tex}

% conference papers do not normally have an appendix

% use section* for acknowledgment
% \section*{Acknowledgment}

% The authors would like to thank...

% \bibliographystyle{IEEEtran}
\bibliographystyle{ieeetr}
\bibliography{bibliographies/ref}

% \clearpage
% \appendix 
% \input{appendix.tex}

% that's all folks
\end{document}